\documentclass[a4paper,11pt]{article}

\usepackage{amsmath,amssymb,mathtools,amsthm,float,setspace}
\usepackage[shortlabels]{enumitem}
\usepackage[colorlinks=true,allcolors=blue]{hyperref}
\usepackage[margin=1.72cm]{geometry}
\usepackage[breakable]{tcolorbox}
\usepackage{tgpagella,eulervm}
\newcommand{\eh}[1]{\fbox{#1}}
\definecolor{lightgray}{rgb}{0.9,0.9,0.9}
\definecolor{lightblue}{RGB}{209,228,245}
\definecolor{deterministic}{RGB}{0, 47, 108}
\definecolor{stochastic}{RGB}{216, 58, 86}
\usepackage[font=small]{caption}
\usepackage{doi}
\usepackage[sort&compress,numbers]{natbib}
\newcommand{\sgn}{\operatorname{sgn}}

\usepackage{tikz}
\newcommand{\tikznode}[2]{%
\ifmmode%
\tikz[remember picture,baseline=(#1.base),inner sep=0pt] \node (#1) {$#2$};%
\else
\tikz[remember picture,baseline=(#1.base),inner sep=0pt] \node (#1) {#2};%
\fi}

\newtheorem{definition}{Definition}
\newtheorem{proposition}{Proposition}
\newtheorem{remark}{Remark}

\newtheorem{algorithm}{Algorithm}

\usepackage{abstract}

\title{Properties of the matrix functions arising in exponential integrators with applications to stochastic simulations of PDEs}
\date{\medskip}

\author{Elliot J. Carr\footnote{Corresponding Author (\href{elliot.carr@qut.edu.au}{elliot.carr@qut.edu.au})}\\ \small School of Mathematical Sciences, Queensland University of Technology (QUT), Brisbane, Australia}

\newcommand{\algbox}[1]{\small\vspace{0.0cm}\fcolorbox{lightblue}{lightblue}{\parbox{0.99\textwidth}{\color{black}\vspace{-0.2cm}#1 \vspace{-0.2cm}}}\vspace{0.2cm}}

\begin{document}
\maketitle
\vspace*{-1.0cm}
\begin{abstract}
The matrix exponential and the closely related matrix phi-functions play a fundamental role in the solution of first-order systems of ordinary differential equations (ODEs). In particular, they appear in exact solutions of certain linear ODE systems, and in exponential integrators, a class of explicit time discretisation methods for computing approximate solutions of stiff semi-linear ODE systems. Fundamental properties of the matrix exponential include that it is nonnegative if the matrix is essentially nonnegative and stochastic if the matrix is a transition-rate matrix. In this paper, we study related properties for the matrix phi-functions. Using these properties, we then provide insights into deterministic solutions of linear and semi-linear ODE systems and outline how such solutions can be used to generate stochastic simulations to quantify variability in model predictions. The paper concludes with some illustrative examples exploring the theory for some ODE systems arising from spatial discretisation of PDEs of diffusion, advection-diffusion, Fisher-KPP and Allen-Cahn type.
\end{abstract}

{\small \textbf{Keywords:} matrix functions, transition-rate matrices, stochastic matrices, stochastic modelling, deterministic modelling, differential equations}

\section{Introduction}
Numerous applications in mathematics, science and engineering require the computation of a matrix function $f(\mathbf{A})$, where $\mathbf{A}\in\mathbb{R}^{n\times n}$ and $f$ is a function mapping $\mathbb{R}^{n\times n}$ to $\mathbb{R}^{n\times n}$ that generalises an underlying scalar function. Common examples include, $f(x) = e^{x}$ (matrix exponential), $f(x) = \sqrt{x}$ (matrix square root), $f(x) = \sin(x)$ and $f(x)=\cos(x)$ (matrix sine and cosine), and $f(x) = \log(x)$ (matrix logarithm) with applications in areas such as control theory, multivariate statistics, fractional calculus, differential equations, complex networks and Markov models \cite{higham_2008,simpson_2010,carr_2025,estrada_2010}. While multiple equivalent definitions of $f(\mathbf{A})$ exist \cite{higham_2008}, if $f(x)$ possesses a convergent power series representation, a simple way to define $f(\mathbf{A})$ is to replace the scalar powers of $x$ with the corresponding matrix powers of $\mathbf{A}$. For example, given $\smash{e^{x} = \sum_{k=0}^{\infty} \frac{1}{k!}x^{k}}$ for all $x\in\mathbb{R}$, the matrix exponential can be defined by the power series $\smash{e^{\mathbf{A}} = \sum_{k=0}^{\infty}\frac{1}{k!}\mathbf{A}^{k}}$ for all $\mathbf{A}\in\mathbb{R}^{n\times n}$ \cite{higham_2008}. Apart from perhaps the matrix inverse (i.e. $f(x) = x^{-1}$), the most studied matrix function is the matrix exponential \cite{wu_2011,ding_2013,ashi_2009,moler_2003} mostly due to the fundamental role it plays in differential equations, e.g. $\mathbf{u}(t) = e^{t\mathbf{A}}\mathbf{u}_{0}$ exactly solves $\mathbf{u}^{\prime}(t) = \mathbf{A}\mathbf{u}$ subject to $\mathbf{u}(0) = \mathbf{u}_{0}$.

Exponential integrators are a class of explicit methods for obtaining approximate solutions to systems of differential equations, distinguished from standard explicit methods due to their use of the matrix exponential and other exponential-like matrix functions \cite{hochbruck_2010,minchev_2005}. These methods have emerged as a strong competitor to traditional implicit methods for stiff problems i.e. problems where the backward Euler method significantly outperforms the forward Euler method \cite{hochbruck_2010}. One of the key reasons for this is that exponential propagators possess favourable properties compared to the linear and shifted inverse propagators of the forward and backward Euler methods \cite{hochbruck_2010}. Overall, exponential integrators enjoy the best of both explicit and implicit time discretisation methods, allowing for large time steps like implicit methods and avoiding the solution of systems of nonlinear equations like explicit methods.

The exponential-like matrix functions appearing in exponential integrators are the so-called matrix phi-functions \cite{skaflestad_2009,higham_2008,hochbruck_2010,aceto_2025,al-mohy_2026}, which emerge in exact solutions of certain linear systems of differential equations, e.g., $\mathbf{u}(t)=~e^{t\mathbf{A}}\mathbf{u}_{0} + t\varphi_{1}(t\mathbf{A})\mathbf{b}$ exactly solves $\mathbf{u}^{\prime}(t) = \mathbf{A}\mathbf{u} + \mathbf{b}$ subject to $\mathbf{u}(0) = \mathbf{u}_{0}$. For scalar argument, $\varphi_{\ell}(x)$ is defined by the power series $\smash{\varphi_{\ell}(x) = \sum_{k=0}^{\infty} \frac{1}{(k+\ell)!}x^{k}}$ for all $x\in\mathbb{R}$ and $\ell = 0,\hdots,p$, where $p$ is a positive integer. With $\varphi_{0}(x) = e^{x}$, the phi-functions satisfy the recurrence relationship $\smash{x\varphi_{\ell}(x) = \varphi_{\ell-1}(x)-\frac{1}{(\ell-1)!}}$ for $\ell = 1,\hdots,p$, which for nonzero $x$ produces the explicit exponential-like functions $\varphi_{1}(x) = (e^{x}-1)/x$, $\varphi_{2}(x) = (e^{x}-1-x)/x^{2}$ and in general $\varphi_{\ell}(x) = (e^{x}-P_{\ell-1}(x))/x^{\ell}$ where $P_{\ell-1}(x)$ is the $(\ell-1)$th degree Maclaurin series of $e^{x}$ \cite{blanes_2026,ashi_2009}. In the matrix setting, $\varphi_{\ell}(\mathbf{A})$ is well defined by the power series 
\begin{gather}
\label{eq:power_series}
\varphi_{\ell}(\mathbf{A}) = \sum_{k=0}^{\infty}\frac{1}{(k+\ell)!}\mathbf{A}^{k},\quad\ell = 0,\hdots,p,
\end{gather}
for all $\mathbf{A}\in\mathbb{R}^{n\times n}$ \cite{schmelzer_2008,wu_2018}, and the recurrence relationship becomes:
\begin{gather}
\label{eq:recurrence_relation}
\mathbf{A}\varphi_{\ell}(\mathbf{A}) = \varphi_{\ell-1}(\mathbf{A})-\frac{1}{(\ell-1)!}\mathbf{I},\quad \ell = 1,\hdots,p,
\end{gather}
where $\mathbf{I}$ is the $n\times n$ identity matrix and $\varphi_{0}(\mathbf{A}) = e^{\mathbf{A}}$. In a similar way to the scalar setting, the recurrence relationship (\ref{eq:recurrence_relation}) yields, this time for nonsingular $\mathbf{A}$, closed-form expressions in terms of the matrix exponential: $\varphi_{1}(\mathbf{A}) = \mathbf{A}^{-1}(e^{\mathbf{A}}-\mathbf{I})$, $\varphi_{2}(\mathbf{A}) = \mathbf{A}^{-2}(e^{\mathbf{A}}-\mathbf{I}-\mathbf{A})$ and in general 
\begin{gather}
\label{eq:closed_form}
\varphi_{\ell}(\mathbf{A}) = \mathbf{A}^{-\ell}(e^{\mathbf{A}}-P_{\ell-1}(\mathbf{A})),\quad \ell = 1,\hdots,p.
\end{gather} 
Due to their prevalence in solutions of ODE systems, a considerable amount of literature has focussed on computing the matrix phi-functions and their action on vectors (e.g. \cite{skaflestad_2009,jawecki_2022,carr_2010,al-mohy_2026,caliari_2023}).  

Systems of differential equations amenable to exact solution or approximate solution using the matrix phi-functions occur in many applications (e.g. \cite{carr_2011,carr_2013,pabisz_2025}). One of the primary reasons for this is that such systems commonly arise from the spatial discretisation of linear and semi-linear partial differential equations (PDEs) where the matrix $\mathbf{A}\in\mathbb{R}^{n\times n}$ represents the discretised linear term and $n$ is the number of mesh/grid nodes. In such problems, $\mathbf{A}$ is often (or can often be transformed into) a transition-rate matrix (or intensity matrix \cite{higham_2008} or infinitesimal generator \cite{sidje_1998}), a matrix whose off-diagonal entries are all nonnegative and whose rows or columns (or both) each sum to zero (Definition \ref{def:1}(v)(vi)). For example, when spatially-discretising PDEs representing conservation laws using the finite volume method, transition-rate matrices arise due to the local conservation satisfied at interfaces between finite volumes with the flux exiting a finite volume through an interface balanced by the flux entering the finite volume sharing that interface. A similar analogy occurs in Markov chains, where conservation requires that the arrival rate into a given state be balanced by the departure rates from all other states. Note that, since a matrix with zero row (and/or column) sums is singular, when $\mathbf{A}$ is a transition-rate matrix $\varphi_{\ell}(\mathbf{A})$ can be defined by the power series (\ref{eq:power_series}) but not the closed-form expression (\ref{eq:closed_form}).

Two fundamental results concerning the matrix exponential include $e^{\mathbf{A}}$ is nonnegative if and only if $\mathbf{A}$ is essentially nonnegative (\cite{higham_2008,bellman_1980}; see Definition \ref{def:1}(i)--(ii)) and $e^{\mathbf{A}}$ is stochastic if $\mathbf{A}$ is a transition-rate matrix (\cite{higham_2008,carr_2025}; see Definition \ref{def:1}(v)--(viii)). In this paper, we study some related properties for the matrix phi-functions including an additional result on the diagonal entries of $\varphi_{\ell}(\mathbf{A})$ that holds for the matrix phi-functions ($\ell = 1,\hdots,p$) but not the matrix exponential itself ($\ell=0$) (Section \ref{sec:properties}). We then use this theory to generate insights into exact and approximate solutions of some linear and semi-linear ODE systems and show how such solutions can be used to generate stochastic simulations of the corresponding deterministic model (Section \ref{sec:application}). Finally, we conclude the paper with some illustrative examples arising from spatial discretisation of some common PDE models of diffusion, advection-diffusion, Fisher-KPP and Allen-Cahn type (Section \ref{sec:examples}).

\section{Properties}
\label{sec:properties}
In this paper, we use $[f(\mathbf{A})]_{ij}$ to denote the entry of $f(\mathbf{A})$ in the $i$th row and $j$th column (or $[f(\mathbf{A})]_{i,j}$ when the indices are more complicated). To be consistent with this notation, we also use $[\mathbf{A}]_{ij}$ and $[\mathbf{A}]_{i,j}$ to denote the entry of $\mathbf{A}$ in the $i$th row and $j$th column and $[\mathbf{b}]_{i}$ to denote the $i$th entry of vector $\mathbf{b}$.

\begin{definition} \label{def:1}\normalfont
Let $\mathbf{A}\in\mathbb{R}^{n\times n}$.
\begin{enumerate}[(i),itemsep=-1pt,topsep=0pt]
\item $\mathbf{A}$ is \textit{nonnegative} if $[\mathbf{A}]_{ij}\geq 0$ for all $i = 1,\hdots,n$ and $j=1,\hdots,n$.
\item $\mathbf{A}$ is \textit{essentially nonnegative} if $[\mathbf{A}]_{ij}\geq 0$ for all $i\neq j$.
\item The \textit{$i$th row sum} of $\mathbf{A}$ is defined by $R_{i}(\mathbf{A}) = \sum_{j=1}^{n} [\mathbf{A}]_{ij}$.
\item The \textit{$j$th column sum} of $\mathbf{A}$ is defined by $C_{j}(\mathbf{A}) = \sum_{i=1}^{n} [\mathbf{A}]_{ij}$.
\item $\mathbf{A}$ is a \textit{row transition-rate matrix} if $\mathbf{A}$ is essentially nonnegative and $R_{i}(\mathbf{A}) = 0$ for all $i = 1,\hdots,n$. Note that this implies that $[\mathbf{A}]_{ii} = -\sum_{j\neq i}[\mathbf{A}]_{ij}\leq 0$.
\item $\mathbf{A}$ is a \textit{column transition-rate matrix} if $\mathbf{A}$ is essentially nonnegative and $C_{j}(\mathbf{A}) = 0$ for all $j = 1,\hdots,n$. Note that this implies that $[\mathbf{A}]_{jj} = -\sum_{i\neq j}[\mathbf{A}]_{ij}\leq 0$.
\item $\mathbf{A}$ is a \textit{row stochastic matrix} if $\mathbf{A}$ is nonnegative and $R_{i}(\mathbf{A}) = 1$ for all $i=1,\hdots,n$.
\item $\mathbf{A}$ is a \textit{column stochastic matrix} if $\mathbf{A}$ is nonnegative and $C_{j}(\mathbf{A}) = 1$ for all $j=1,\hdots,n$.
\end{enumerate}
\end{definition}
\smallskip

\begin{proposition}\label{prop:1}\normalfont
Let $\mathbf{A}\in\mathbb{R}^{n\times n}$ and $p$ be a positive integer.
\begin{enumerate}[(i),itemsep=-1pt,topsep=0pt]
\item If $\mathbf{A}$ is essentially nonnegative, then $\varphi_{\ell}(\mathbf{A})$ is nonnegative for all $\ell = 0,\hdots,p$.
\item If $\mathbf{A}$ is essentially nonnegative, then $\varphi_{\ell}(\mathbf{A}) - \ell\varphi_{\ell+1}(\mathbf{A})$ is nonnegative for all $\ell = 0,\hdots,p$. 
\end{enumerate}
\end{proposition}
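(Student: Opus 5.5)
Both parts rest on the integral representation of the phi-functions together with a diagonal shift of $\mathbf{A}$. For $\ell\geq 1$ we use
\begin{gather*}
\varphi_{\ell}(\mathbf{A}) = \frac{1}{(\ell-1)!}\int_{0}^{1} e^{(1-s)\mathbf{A}}s^{\ell-1}\,\mathrm{d}s .
\end{gather*}
This identity can be checked term by term from the power series (\ref{eq:power_series}). Expanding $e^{(1-s)\mathbf{A}}$ and integrating gives the Beta integral $\int_0^1 (1-s)^k s^{\ell-1}\,\mathrm{d}s = k!(\ell-1)!/(k+\ell)!$, which produces exactly the coefficient $1/(k+\ell)!$. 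Interchanging sum and integral is justified by absolute convergence.

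For part (i), we first show that $e^{t\mathbf{A}}$ is nonnegative for every $t\geq 0$. Write $\mathbf{A} = \mathbf{B} - c\mathbf{I}$ with $c\geq \max_i |[\mathbf{A}]_{ii}|$, so that $\mathbf{B}$ is nonnegative. Then $e^{t\mathbf{A}} = e^{-ct}e^{t\mathbf{B}}$, because $\mathbf{B}$ and $c\mathbf{I}$ commute. The series for $e^{t\mathbf{B}}$ has only nonnegative terms, so $e^{t\mathbf{A}}$ is nonnegative. This settles $\ell=0$. For $\ell\geq 1$ the integrand in the representation above is a nonnegative matrix times the nonnegative weight $s^{\ell-1}$, so $\varphi_{\ell}(\mathbf{A})$ is nonnegative entrywise.

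For part (ii), the case $\ell=0$ is immediate, since $\varphi_0(\mathbf{A}) - 0\cdot\varphi_1(\mathbf{A}) = e^{\mathbf{A}}$, which is nonnegative by (i). For $\ell\geq1$ we combine the two representations:
\begin{gather*}
\varphi_{\ell}(\mathbf{A}) - \ell\varphi_{\ell+1}(\mathbf{A}) = \int_0^1 e^{(1-s)\mathbf{A}}\left[\frac{s^{\ell-1}}{(\ell-1)!} - \frac{\ell s^{\ell}}{\ell!}\right]\mathrm{d}s = \frac{1}{(\ell-1)!}\int_0^1 e^{(1-s)\mathbf{A}}\, s^{\ell-1}(1-s)\,\mathrm{d}s .
\end{gather*}
The weight $s^{\ell-1}(1-s)$ is nonnegative on $[0,1]$ and $e^{(1-s)\mathbf{A}}$ is nonnegative there, so the result is nonnegative. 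The upper index $\ell=p$ is covered as well, since $\varphi_{p+1}$ is defined by the same series and formula.

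An alternative route for (ii), should one prefer to avoid integrals, uses the series directly. The coefficient of $\mathbf{A}^k$ in $\varphi_\ell - \ell\varphi_{\ell+1}$ is
\begin{gather*}
\frac{1}{(k+\ell)!} - \frac{\ell}{(k+\ell+1)!} = \frac{k+1}{(k+\ell+1)!}.
\end{gather*}
This coefficient is positive, but $\mathbf{A}^k$ need not be nonnegative. One would therefore again apply the shift $\mathbf{A}=\mathbf{B}-c\mathbf{I}$ and have to track the resulting terms, which is messier.

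The main obstacle is handling the possibly negative diagonal of $\mathbf{A}$. The integral representation resolves this cleanly, because it reduces every case to the nonnegativity of $e^{t\mathbf{A}}$ for $t\geq0$, and that follows from the shift argument. The only genuine verification is the Beta-integral check of the representation, which is routine.
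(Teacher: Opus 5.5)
Your proof is correct and essentially matches the paper. Part (ii) is exactly the paper's argument, using the integral representation with weight $\theta^{\ell-1}-\theta^{\ell}$. For part (i) you take the integral-representation route that the paper mentions as an alternative at the end of its proof; its main argument instead reads nonnegativity of every $\varphi_\ell(\mathbf{A})$ off the first block row of $e^{\mathbf{B}}$ for the essentially nonnegative augmented matrix $\mathbf{B}$ in (\ref{eq:expB}). You are also more self-contained than the paper: you verify the integral representation via the Beta integral and prove nonnegativity of $e^{t\mathbf{A}}$ by the diagonal shift, where the paper cites both facts.
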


\noindent\textbf{Proof.}\\[-3ex]
\begin{enumerate}[(i),itemsep=0pt,topsep=0pt]
\item Consider the identity
\begin{gather}
\label{eq:expB}
\mathbf{B} = \left[\begin{matrix} \mathbf{A} & \mathbf{I} & \mathbf{0} & \hdots & \mathbf{0}\\
 & \mathbf{0} & \mathbf{I} & \ddots & \vdots\\
 &  & \mathbf{0} & \ddots & \mathbf{0}\\
 &  &  & \ddots & \mathbf{I}\\
  &  &  &  & \mathbf{0}
\end{matrix}\right]\quad
\text{then}\quad 
e^{\mathbf{B}} = \left[\begin{matrix} e^{\mathbf{A}} & \varphi_{1}(\mathbf{A}) & \varphi_{2}(\mathbf{A}) & \hdots & \varphi_{p}(\mathbf{A})\\
 & \mathbf{I} & \frac{1}{1!}\mathbf{I} & \hdots & \frac{1}{(p-1)!}\mathbf{I}\\
 &  & \mathbf{I} & \ddots & \vdots\\
 &  &  & \ddots & \frac{1}{1!}\mathbf{I}\\
  &  &  &  & \mathbf{I}
\end{matrix}\right],
\end{gather}
which is stated in \cite{jimenez_2020} and can be shown using a fairly straightforward generalisation of earlier results \cite{sidje_1998,saad_1992,minchev_2004}. A generalised result was also very recently provided in \cite{al-mohy_2026} for an arbitrary analytic function in place of the exponential. Now, if $\mathbf{A}$ is essentially nonnegative then clearly $\mathbf{B}$ is too. Since $\mathbf{B}$ is essentially nonnegative, $e^{\mathbf{B}}$ is nonnegative \cite[Thm 10.29]{higham_2008} and hence from (\ref{eq:expB}) it is clear that $\varphi_{\ell}(\mathbf{A})$ is nonnegative for all $\ell = 0,\hdots,p$.
\item The result is trivially true for $\ell = 0$ since $\varphi_{0}(\mathbf{A}) = e^{\mathbf{A}}$ is nonnegative if $\mathbf{A}$ is essentially nonnegative \cite[Thm 10.29]{higham_2008}. For $\ell = 1,\hdots,p$, the integral representation of the matrix phi-functions \cite{al-mohy_2026,skaflestad_2009,hochbruck_2010} 
\begin{gather}
\label{eq:phi_integral}
\varphi_{\ell}(\mathbf{A}) = \frac{1}{(\ell-1)!}\int_{0}^{1}\theta^{\ell-1}e^{(1-\theta)\mathbf{A}}\,\text{d}\theta,
\end{gather}
gives the integral expression
\begin{align}
\label{eq:varphi_property}
\varphi_{\ell}(\mathbf{A}) - \ell\varphi_{\ell+1}(\mathbf{A}) &= \frac{1}{(\ell-1)!}\int_{0}^{1} \bigl[\theta^{\ell-1} - \theta^{\ell}\bigr]e^{(1-\theta)\mathbf{A}}\,\text{d}\theta.
\end{align}
Now, if $\mathbf{A}$ is essentially nonnegative, then $e^{(1-\theta)\mathbf{A}}$ is nonnegative for all $\theta\in[0,1]$. Since $\theta^{\ell-1}-\theta^{\ell}$ and $1-\theta$ are also both nonnegative for all $\theta\in[0,1]$, it is clear that (\ref{eq:varphi_property}) is nonnegative for all $\ell = 1,\hdots.p$. Note that equation (\ref{eq:phi_integral}) provides an alternative way to prove Proposition \ref{prop:1}(i) since (\ref{eq:phi_integral}) is clearly nonnegative when $\mathbf{A}$ is essentially nonnegative. \qed
\end{enumerate}

\medskip

\begin{proposition}\label{prop:2}\normalfont
Let $\mathbf{A}\in\mathbb{R}^{n\times n}$ be a row transition-rate matrix and $p$ be a positive integer.
\begin{enumerate}[(i),itemsep=-1pt,topsep=0pt]
\item $\widetilde{\varphi}_{\ell}(\mathbf{A}) := \ell!\varphi_{\ell}(\mathbf{A})$ is a row stochastic matrix for all $\ell = 0,\hdots,p$.
\item $[\varphi_{\ell}(\mathbf{A})]_{ij}\in [0,\frac{1}{\ell!}]$ for all $i = 1,\hdots,n$ and $j = 1,\hdots,n$ and for all $\ell = 0,\hdots,p$.
\item $[\varphi_{\ell}(\mathbf{A})]_{jj}\geq [\varphi_{\ell}(\mathbf{A})]_{ij}$ for all $i = 1,\hdots,n$ and $j = 1,\hdots,n$ and for all $\ell = 1,\hdots,p$.
\end{enumerate}
\end{proposition}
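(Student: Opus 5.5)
Parts (i) and (ii) follow quickly from Proposition \ref{prop:1}(i) and the power series (\ref{eq:power_series}); part (iii) is the substantive one.

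\emph{Part (i).} A row transition-rate matrix is essentially nonnegative, so by Proposition \ref{prop:1}(i) each $\varphi_{\ell}(\mathbf{A})$ is nonnegative, and hence so is $\widetilde{\varphi}_{\ell}(\mathbf{A})$. For the row sums, write $\mathbf{e}$ for the vector of ones, so that $\mathbf{A}\mathbf{e}=\mathbf{0}$. Then $\mathbf{A}^{k}\mathbf{e}=\mathbf{0}$ for every $k\geq 1$, and (\ref{eq:power_series}) gives
\begin{gather*}
\varphi_{\ell}(\mathbf{A})\mathbf{e}=\frac{1}{\ell!}\mathbf{e}.
\end{gather*}
Multiplying by $\ell!$ shows every row sum of $\widetilde{\varphi}_{\ell}(\mathbf{A})$ equals $1$, so it is row stochastic.

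\emph{Part (ii).} The entries of a row stochastic matrix are nonnegative and each row sums to one, so every entry lies in $[0,1]$. Dividing the entries of $\widetilde{\varphi}_{\ell}(\mathbf{A})$ by $\ell!$ gives $[\varphi_{\ell}(\mathbf{A})]_{ij}\in[0,\frac{1}{\ell!}]$.

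\emph{Part (iii).} I would use the integral representation (\ref{eq:phi_integral}), which reduces the claim for $\ell\geq1$ to a statement about the stochastic matrices $\mathbf{P}(s)=e^{s\mathbf{A}}$, $s\geq0$. Integrating that statement against the nonnegative weight $\theta^{\ell-1}$ with $s=1-\theta$ then gives the result. Concretely, fix $j$ and set $\mathbf{v}(s)=e^{s\mathbf{A}}\mathbf{e}_j$, so that $[\mathbf{v}(s)]_i=[e^{s\mathbf{A}}]_{ij}$. The pointwise inequality $[e^{s\mathbf{A}}]_{jj}\geq[e^{s\mathbf{A}}]_{ij}$ is false in general; this is why the result fails for $\ell=0$. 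So I would instead prove an integrated comparison.

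The key tool is the strong Markov property, i.e.\ the first-passage decomposition. Let $T$ be the first hitting time of state $j$ for the chain with generator $\mathbf{A}$ started at $i$. Then
\begin{gather*}
[e^{s\mathbf{A}}]_{ij}=\int_0^s [e^{(s-u)\mathbf{A}}]_{jj}\,\mathrm{d}F_{ij}(u),
\end{gather*}
where $F_{ij}$ is the sub-distribution of $T$.

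Write $g(s)=\theta^{\ell-1}/(\ell-1)!$ with $\theta=1-s$, so that $\varphi_\ell(\mathbf{A})=\int_0^1 g(s)e^{s\mathbf{A}}\,\mathrm{d}s$ and $g$ is nonincreasing in $s$ for $\ell\geq1$. Substituting the decomposition and applying Fubini gives
\begin{gather*}
[\varphi_\ell(\mathbf{A})]_{ij}=\int_0^1\!\!\int_0^{1-u} g(u+r)\,[e^{r\mathbf{A}}]_{jj}\,\mathrm{d}r\,\mathrm{d}F_{ij}(u).
\end{gather*}
Because $g$ is nonincreasing, $g(u+r)\leq g(r)$. Because $[e^{r\mathbf{A}}]_{jj}\geq0$, the inner integral is at most $\int_0^1 g(r)[e^{r\mathbf{A}}]_{jj}\,\mathrm{d}r=[\varphi_\ell(\mathbf{A})]_{jj}$. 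Finally $F_{ij}$ has total mass at most $1$, which gives the claim.

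\emph{Main obstacle.} The difficulty is making the first-passage decomposition rigorous in purely matrix terms. I would do this with an algebraic renewal identity: partition $\mathbf{A}$ by removing row and column $j$ (killing the chain at $j$). Writing $\mathbf{A}_{\hat j}$ for the principal submatrix with row and column $j$ deleted, the identity is
\begin{gather*}
[e^{s\mathbf{A}}]_{ij}=\int_0^s \mathbf{e}_i^{T}e^{u\mathbf{A}_{\hat j}}\mathbf{a}_j\,[e^{(s-u)\mathbf{A}}]_{jj}\,\mathrm{d}u, \qquad i\neq j,
\end{gather*}
where $\mathbf{a}_j\geq\mathbf{0}$ is the $j$th column of $\mathbf{A}$ with entry $j$ removed. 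I would verify it by checking that both sides satisfy the same linear ODE in $s$ with the same initial data. The kernel is nonnegative by Proposition \ref{prop:1}(i), since $\mathbf{A}_{\hat j}$ is essentially nonnegative. Its total mass is at most $1$ because $\mathbf{A}_{\hat j}$ has nonpositive row sums, which makes $\int_0^\infty e^{u\mathbf{A}_{\hat j}}\mathbf{a}_j\,\mathrm{d}u\leq\mathbf{e}$.
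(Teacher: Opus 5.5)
Your parts (i) and (ii) coincide with the paper's, but for (iii) you take a genuinely different route, and it is correct. The paper uses the recurrence $\mathbf{A}\varphi_{\ell}(\mathbf{A})=\varphi_{\ell-1}(\mathbf{A})-\frac{1}{(\ell-1)!}\mathbf{I}$ together with part (ii) to fix the sign pattern of $\mathbf{A}\varphi_{\ell}(\mathbf{A})$. It then rewrites $[\mathbf{A}\varphi_{\ell}(\mathbf{A})]_{ij}=\sum_{k\neq i}[\mathbf{A}]_{ik}\bigl([\varphi_{\ell}(\mathbf{A})]_{kj}-[\varphi_{\ell}(\mathbf{A})]_{ij}\bigr)$ and reads off the conclusion as a discrete maximum principle on column $j$. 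You instead combine the integral representation, with its nonincreasing weight $g(s)=(1-s)^{\ell-1}/(\ell-1)!$, with a first-passage decomposition. You correctly realise that decomposition in matrix form as variation of constants applied to the rows $i\neq j$ of $\mathbf{v}'=\mathbf{A}\mathbf{v}$, $\mathbf{v}(0)=\mathbf{e}_j$.

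Everything downstream (Tonelli, $g(u+r)\le g(r)$, enlarging the inner integral to $[0,1]$) is sound. State the mass bound precisely, though: the operative fact is the identity $\mathbf{a}_j=-\mathbf{A}_{\hat j}\mathbf{e}$ coming from the zero row sums. This gives $\int_0^1 e^{u\mathbf{A}_{\hat j}}\mathbf{a}_j\,\mathrm{d}u=\mathbf{e}-e^{\mathbf{A}_{\hat j}}\mathbf{e}\le\mathbf{e}$; the fact that $\mathbf{A}_{\hat j}$ has nonpositive row sums is not enough on its own.

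The paper's route is shorter, purely algebraic, and pinpoints the recurrence as what is missing at $\ell=0$. However, its closing step (``the only way this can be true\ldots'') is informal. The local conditions as summarised there are also satisfied by, e.g., the column $(3,5,5,1)^{T}$ with $j=1$. Making it airtight needs a genuine maximum-principle argument: if the set where column $j$ attains its maximum missed $j$, it would be closed under the transitions of $\mathbf{A}$, forcing those entries to vanish.

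Your argument needs no such repair and covers reducible $\mathbf{A}$ automatically. It also gives more: the quantitative bound $[\varphi_{\ell}(\mathbf{A})]_{ij}\le F_{ij}(1)\,[\varphi_{\ell}(\mathbf{A})]_{jj}$, and the same inequality for $\int_0^1 g(s)e^{s\mathbf{A}}\,\mathrm{d}s$ with any nonnegative nonincreasing $g$. That makes transparent why the argument stops at $\ell=0$, where the weight degenerates to a point mass at $s=1$.
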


\noindent\textbf{Proof.}\\[-3ex]
\begin{enumerate}[(i),itemsep=0pt,topsep=0pt,leftmargin=0.8cm]
\item If $\mathbf{A}$ is a row transition-rate matrix, then $\mathbf{A}$ is essentially nonnegative (Definition \ref{def:1}(v)) and therefore $\varphi_{\ell}(\mathbf{A})$ is nonnegative for all $\ell = 0,\hdots,p$ (Proposition \ref{prop:1}(i)). Since $\ell!$ is positive,  clearly $\widetilde{\varphi}_{\ell}(\mathbf{A})$ is also nonnegative for all $\ell = 0,\hdots,p$. Now, as $\mathbf{A}$ is a row transition-rate matrix, $R_{i}(\mathbf{A})=0$ for all $i=1,\hdots,n$, and the $i$th row sum of $\widetilde{\varphi}_{\ell}(\mathbf{A})$ is given by
\begin{gather*}
R_{i}(\widetilde{\varphi}_{\ell}(\mathbf{A})) = \ell! R_{i}(\varphi_{\ell}(\mathbf{A})) = \ell!\left[\frac{1}{\ell!}R_{i}(\mathbf{I}) + \sum_{k=1}^{\infty} \frac{1}{(k+\ell)!}R_{i}(\mathbf{A}^{k})\right] = 1,
\end{gather*}
for all $i = 1,\hdots,n$. Here, we have used the power series expansion (\ref{eq:power_series}) and the fact that $R_{i}(\mathbf{A}^{k}) = 0$  for all $k\in\{1,2,\hdots\}$ when $R_{i}(\mathbf{A}) = 0$ for all $i=1,\hdots,n$ (i.e., $\mathbf{A}\mathbf{e}=\mathbf{0}$ implies $\mathbf{A}^{k}\mathbf{e} = \mathbf{A}^{k-1}(\mathbf{A}\mathbf{e}) = \mathbf{A}^{k-1}\mathbf{0} = \mathbf{0}$, where $\mathbf{e} = (1,\hdots,1)^{T}$). Since $\widetilde{\varphi}_{\ell}(\mathbf{A})$ is nonnegative and the sum of each row of $\widetilde{\varphi}_{\ell}(\mathbf{A})$ is equal to one, $\widetilde{\varphi}_{\ell}(\mathbf{A})$ is a row stochastic matrix.
\item Follows directly from the fact that $\varphi_{\ell}(\mathbf{A})$ is nonnegative and $R_{i}(\varphi_{\ell}(\mathbf{A})) = \frac{1}{\ell!}$ for all $i=1,\hdots,n$.

\item Recall the recurrence relation (\ref{eq:recurrence_relation}), restated here for convenience: $\smash{\mathbf{A}\varphi_{\ell}(\mathbf{A}) = \varphi_{\ell-1}(\mathbf{A}) - \frac{1}{(\ell-1) !}\mathbf{I}}$ for $\ell = 1,\hdots,p$. From (ii), we have $\smash{[\varphi_{\ell-1}(\mathbf{A})]_{ij}\in[0,\frac{1}{(\ell-1)!}]}$ and therefore the diagonal entries of $\smash{\varphi_{\ell-1}(\mathbf{A}) - \frac{1}{(\ell-1) !}\mathbf{I}}$ are nonpositive and the off-diagonal entries of $\smash{\varphi_{\ell-1}(\mathbf{A}) - \frac{1}{(\ell-1) !}\mathbf{I}}$ are nonnegative. Next, consider the off-diagonal entries of $\mathbf{A}\varphi_{\ell}(\mathbf{A})$
\begin{align*}
[\mathbf{A}\varphi_{\ell}(\mathbf{A})]_{ij} = \sum_{k=1}^{n}[\mathbf{A}]_{ik}[\varphi_{\ell}(\mathbf{A})]_{kj} &= [\mathbf{A}]_{ii}[\varphi_{\ell}(\mathbf{A})]_{ij} + \sum_{k\neq i} [\mathbf{A}]_{ik}[\varphi_{\ell}(\mathbf{A})]_{kj}\\
&= -\sum_{k\neq i} [\mathbf{A}]_{ik}[\varphi_{\ell}(\mathbf{A})]_{ij} + \sum_{k\neq i} [\mathbf{A}]_{ik}[\varphi_{\ell}(\mathbf{A})]_{kj}\\
&= \sum_{k\neq i} [\mathbf{A}]_{ik}\left([\varphi_{\ell}(\mathbf{A})]_{kj} - [\varphi_{\ell}(\mathbf{A})]_{ij}\right),
\end{align*}
where we have used the fact that $\mathbf{A}$ is row transition-rate matrix with $[\mathbf{A}]_{ii} = -\sum_{k\neq i} [\mathbf{A}]_{ik}$ (Definition \ref{def:1}(v)). Since $[\mathbf{A}]_{ik}$ is nonnegative for all $k\neq i$, a necessary condition for $[\mathbf{A}\varphi_{\ell}(\mathbf{A})]_{ij}$ to be nonnegative is that for all $i\neq j$ there exists at least one $k\neq i$ such that $[\varphi_{\ell}(\mathbf{A})]_{ij}\leq [\varphi_{\ell}(\mathbf{A})]_{kj}$. Similarly, the diagonal entries of $\mathbf{A}\varphi_{\ell}(\mathbf{A})$ can be expressed as
\begin{align*}
[\mathbf{A}\varphi_{\ell}(\mathbf{A})]_{jj} = \sum_{k\neq j} [\mathbf{A}]_{jk}\left([\varphi_{\ell}(\mathbf{A})]_{kj} - [\varphi_{\ell}(\mathbf{A})]_{jj}\right).
\end{align*}
Since $[\mathbf{A}]_{jk}$ is nonnegative for all $k\neq j$, a necessary condition for $[\mathbf{A}\varphi_{\ell}(\mathbf{A})]_{jj}$ to be nonpositive is that for all $j$ there exists at least one $k\neq j$ such that $[\varphi_{\ell}(\mathbf{A})]_{kj}\leq  [\varphi_{\ell}(\mathbf{A})]_{jj}$. In summary, in the $j$th column of $\varphi_{\ell}(\mathbf{A})$, each off-diagonal entry must be less than or equal to at least one other entry in the $j$th column and each diagonal entry must be greater than at least one other entry in the $j$th column.  Clearly, the only way this can be true is if every off-diagonal entry in a given column is less than or equal to the diagonal entry i.e. $ [\varphi_{\ell}(\mathbf{A})]_{ij}\leq[\varphi_{\ell}(\mathbf{A})]_{jj}$. \qed
\end{enumerate}

\bigskip
\begin{proposition}\label{prop:3}\normalfont
Let $\mathbf{A}\in\mathbb{R}^{n\times n}$ be a column transition-rate matrix and $p$ be a positive integer.
\begin{enumerate}[(i),itemsep=0pt,topsep=0pt,leftmargin=0.8cm]
\item $\widetilde{\varphi}_{\ell}(\mathbf{A}) := \ell!\varphi_{\ell}(\mathbf{A})$ is a column stochastic matrix for all $\ell = 0,\hdots,p$.
\item $[\varphi_{\ell}(\mathbf{A})]_{ij}\in [0,\frac{1}{\ell!}]$ for all $i = 1,\hdots,n$ and $j = 1,\hdots,n$ and for all $\ell = 0,\hdots,p$.
\item $[\varphi_{\ell}(\mathbf{A})]_{ii}\geq [\varphi_{\ell}(\mathbf{A})]_{ij}$ for all $i = 1,\hdots,n$ and $j = 1,\hdots,n$ and for all $\ell = 1,\hdots,p$. 
\end{enumerate}
\end{proposition}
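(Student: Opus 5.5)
The plan is to reduce Proposition~\ref{prop:3} to Proposition~\ref{prop:2} by transposition. If $\mathbf{A}$ is a column transition-rate matrix, then $\mathbf{A}^{T}$ is a row transition-rate matrix. Essential nonnegativity is preserved under transposition, and $R_{i}(\mathbf{A}^{T}) = C_{i}(\mathbf{A}) = 0$. From the power series (\ref{eq:power_series}) and $(\mathbf{A}^{k})^{T} = (\mathbf{A}^{T})^{k}$, termwise transposition of the absolutely convergent series gives
\begin{gather*}
\varphi_{\ell}(\mathbf{A}^{T}) = \varphi_{\ell}(\mathbf{A})^{T}, \qquad \ell = 0,\hdots,p.
\end{gather*}
Hence $[\varphi_{\ell}(\mathbf{A})]_{ij} = [\varphi_{\ell}(\mathbf{A}^{T})]_{ji}$, and each part should follow by applying Proposition~\ref{prop:2} to $\mathbf{A}^{T}$ and reading off the transposed statement.

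For (i), Proposition~\ref{prop:2}(i) says $\ell!\varphi_{\ell}(\mathbf{A}^{T})$ is row stochastic. Its transpose, $\ell!\varphi_{\ell}(\mathbf{A})$, is then nonnegative with column sums equal to one, i.e.\ column stochastic. For (ii), the entrywise bound $[0,\frac{1}{\ell!}]$ from Proposition~\ref{prop:2}(ii) is invariant under transposition. For (iii), Proposition~\ref{prop:2}(iii) applied to $\mathbf{A}^{T}$ gives
\begin{gather*}
[\varphi_{\ell}(\mathbf{A}^{T})]_{jj} \geq [\varphi_{\ell}(\mathbf{A}^{T})]_{ij} \quad \text{for all } i,j.
\end{gather*}
Rewriting in terms of $\varphi_{\ell}(\mathbf{A})$ gives $[\varphi_{\ell}(\mathbf{A})]_{jj} \geq [\varphi_{\ell}(\mathbf{A})]_{ji}$, and relabelling $i\leftrightarrow j$ yields $[\varphi_{\ell}(\mathbf{A})]_{ii}\geq[\varphi_{\ell}(\mathbf{A})]_{ij}$. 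This is the stated row-wise diagonal dominance.

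The main obstacle is not in this reduction but in the fact that it inherits the argument used for Proposition~\ref{prop:2}(iii). As written, that argument only extracts necessary conditions of the form ``at least one other entry is larger/smaller,'' which does not by itself force every off-diagonal entry of a column to be bounded by the diagonal entry. If a self-contained proof of (iii) is needed, I would instead argue directly with the integral representation (\ref{eq:phi_integral}) or with a maximum-principle argument. The route I would try first is the following. Column $j$ of $\varphi_{\ell}(\mathbf{A}^{T})$, call it $\mathbf{v}$, satisfies $\mathbf{A}^{T}\mathbf{v} = \mathbf{w}$, where $\mathbf{w}$ is column $j$ of $\varphi_{\ell-1}(\mathbf{A}^{T}) - \frac{1}{(\ell-1)!}\mathbf{I}$. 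By part (ii) for $\ell-1$, $\mathbf{w}$ has nonnegative entries off index $j$. At an index $m$ where $\mathbf{v}$ attains its maximum, the row transition-rate structure gives $[\mathbf{A}^{T}\mathbf{v}]_{m} = \sum_{k\neq m}[\mathbf{A}^{T}]_{mk}(v_{k}-v_{m}) \leq 0$. If the maximum were attained only off index $j$, this would need $[\mathbf{w}]_{m}=0$. Handling this zero case is the delicate point: I would consider the set of indices attaining the maximum and show that it must contain $j$, using the fact that points in this set with $[\mathbf{w}]_{m}=0$ have all their neighbours also in the set, with reducibility treated via a perturbation and continuity argument.
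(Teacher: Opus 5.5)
Your reduction is exactly the paper's proof: apply Proposition~\ref{prop:2} to the row transition-rate matrix $\mathbf{A}^{T}$, use $\varphi_{\ell}(\mathbf{A}^{T})=\varphi_{\ell}(\mathbf{A})^{T}$, and read off (i)--(iii) by transposing and relabelling indices. Your caveat about the ``necessary condition'' step in the proof of Proposition~\ref{prop:2}(iii) is well founded, and your maximum-set sketch repairs it without any perturbation: if $j$ were not in the set $S$ of maximising indices of $\mathbf{v}$, then $S$ would be closed under $\mathbf{A}^{T}$ (no positive $[\mathbf{A}^{T}]_{mk}$ with $m\in S$, $k\notin S$), so the power series gives $[\varphi_{\ell}(\mathbf{A}^{T})]_{mj}=0$ for $m\in S$, contradicting $\max\mathbf{v}\geq[\varphi_{\ell}(\mathbf{A}^{T})]_{jj}>0$.
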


\noindent\textbf{Proof.} All three results follow from Proposition \ref{prop:2}, the identities $f(\mathbf{A}^{T}) = f(\mathbf{A})^{T}$ and $f(\mathbf{A}) = f(\mathbf{A}^{T})^{T}$ \cite[Thm 1.13(b)]{higham_2008}, and the observations that $\mathbf{A}^{T}$ is a row transition-rate matrix if $\mathbf{A}$ is a column transition-rate matrix and $f(\mathbf{A})^{T}$ is a column stochastic matrix if $f(\mathbf{A})$ is a row stochastic matrix.
\begin{enumerate}[(i),itemsep=0pt,topsep=0pt,leftmargin=0.8cm]
\item If $\mathbf{A}$ is a column transition-rate matrix then $\widetilde{\varphi}_{\ell}(\mathbf{A}^{T})$ is a row stochastic matrix (Proposition \ref{prop:2}(i)) and hence $\widetilde{\varphi}_{\ell}(\mathbf{A}) = \widetilde{\varphi}_{\ell}(\mathbf{A}^{T})^{T}$ is a column stochastic matrix.
\item $\varphi_{\ell}(\mathbf{A})$ is nonnegative from Proposition \ref{prop:1}(i) and $[\varphi_{\ell}(\mathbf{A})]_{ij} = [\varphi_{\ell}(\mathbf{A})^{T}]_{ji} = [\varphi_{\ell}(\mathbf{A}^{T})]_{ji}\leq\frac{1}{\ell!}$.
\item $[\varphi_{\ell}(\mathbf{A})]_{ii} = [\varphi_{\ell}(\mathbf{A})^{T}]_{ii} = [\varphi_{\ell}(\mathbf{A}^{T})]_{ii}\geq [\varphi_{\ell}(\mathbf{A}^{T})]_{ji} = [\varphi_{\ell}(\mathbf{A})]_{ij}$. \qed
\end{enumerate}

\bigskip
\begin{remark}\normalfont\label{rem:1}
Let $\gamma$ be a nonnegative scalar. If $\mathbf{A}$ is essentially nonnegative then $\gamma\mathbf{A}$ is also essentially nonnegative. Similarly, if $\mathbf{A}$ is a row (column) transition-rate matrix then $\gamma\mathbf{A}$ is also a row (column) transition-rate matrix. Therefore, Propositions \ref{prop:1}--\ref{prop:3} remain true if $\mathbf{A}$ is replaced by $\gamma\mathbf{A}$. This is important in Section \ref{sec:application}, where $\gamma$ represents positive time or a positive time step.
\end{remark}

\begin{remark}\normalfont\label{rem:2}
Note that Propositions \ref{prop:2}(iii) and \ref{prop:3}(iii) do not include $\ell=0$ as both results are not true in general for the matrix exponential i.e. $\varphi_{0}(\mathbf{A}) = e^{\mathbf{A}}$. For example, the exponential of the row transition-rate matrix
\begin{gather*}
\mathbf{A} = \begin{bmatrix*}[r] -1 & 1 & 0\\ 1 & -2 & 1\\ 0 & 2 & -2\end{bmatrix*},
\end{gather*}
is a matrix whose largest entry in the second column is not the diagonal entry:
\begin{gather*}
e^{\mathbf{A}} = \left[\begin{matrix} \eh{0.5335} & 0.3444 & 0.1221\\ 0.3444 & 0.4332 & 0.2224\\ 0.2441 & \eh{0.4447} & \eh{0.3112}\end{matrix}\right] = \varphi_{0}(\mathbf{A}).
\end{gather*}
In contrast, the phi-functions, of which the first six are given below,
\begin{gather*}
\varphi_{1}(\mathbf{A}) = \left[\begin{matrix} \eh{0.7043} & 0.2378 & 0.0579\\ 0.2378 & \eh{0.5822} & 0.1799\\ 0.1158 & 0.3599 & \eh{0.5244}\end{matrix}\right],\quad 
\varphi_{2}(\mathbf{A}) = \left[\begin{matrix} \eh{0.3890} & 0.0933 & 0.0177\\ 0.0933 & \eh{0.3311} & 0.0756\\ 0.0354 & 0.1512 & \eh{0.3134}\end{matrix}\right],\\
\varphi_{3}(\mathbf{A}) = \left[\begin{matrix} \eh{0.1368} & 0.0258 & 0.0040\\ 0.0258 & \eh{0.1191} & 0.0218\\ 0.0081 & 0.0435 & \eh{0.1151}\end{matrix}\right],\quad 
\varphi_{4}(\mathbf{A}) = \left[\begin{matrix} \eh{0.0354} & 0.0055 & 0.0007\\ 0.0055 & \eh{0.0313} & 0.0048\\ 0.0015 & 0.0096 & \eh{0.0306}\end{matrix}\right],\\
\varphi_{5}(\mathbf{A}) = \left[\begin{matrix} \eh{0.0072} & 0.0010 & 0.0001\\ 0.0010 & \eh{0.0065} & 0.0009\\ 0.0002 & 0.0017 & \eh{0.0064}\end{matrix}\right],\quad 
\varphi_{6}(\mathbf{A}) = \left[\begin{matrix} \eh{0.0012} & 0.0001 & 0.0000\\ 0.0001 & \eh{0.0011} & 0.0001\\ 0.0000 & 0.0003 & \eh{0.0011}\end{matrix}\right],
\end{gather*}
all satisfy the property in Proposition \ref{prop:2}(iii) as expected. Additionally, the proof of Proposition \ref{prop:2}(iii) breaks down for the matrix exponential as the recurrence relation (\ref{eq:recurrence_relation}) does not hold when $\ell = 0$.
\end{remark}

\begin{remark}\normalfont\label{rem:3}
While not true in general, Propositions \ref{prop:2}(iii) and \ref{prop:3}(iii) hold for the matrix exponential in some special cases. For example, the general $2\times 2$ transition-rate matrix:
\begin{equation*}
\mathbf{A} = \left[\begin{matrix*}[r] -a & a\\ a & -a\end{matrix*}\right],\quad e^{\mathbf{A}} = \frac{1}{2}\left[\begin{matrix*}[r] 1 + e^{-2a} & 1 - e^{-2a}\\ 1-e^{-2a} & 1 + e^{-2a}\end{matrix*}\right],
\end{equation*}
satisfies both $[e^{\mathbf{A}}]_{jj}\geq [e^{\mathbf{A}}]_{ij}$ and $[e^{\mathbf{A}}]_{ii}\geq [e^{\mathbf{A}}]_{ij}$ for all $i=1,2$ and $j=1,2$. 
\end{remark}

\section{Application}
\label{sec:application}
As mentioned earlier, the matrix phi-functions appear in exact solutions of systems of differential equations and in exponential integrators for obtaining approximate solutions to systems of differential equations. Using the properties of the previous section, we now provide some insights into deterministic solutions of certain linear and semi-linear systems of differential equations and then discuss how these can be used to generate stochastic simulations. Throughout this section we employ the following notation: $\mathbf{u}\in\mathbb{R}^{n}$ is the solution/state vector, $t>0$ is time, $\mathbf{u}_{0}$ is the specified initial vector, $\mathbf{A}$ is a constant $n\times n$ matrix and $\mathbf{b}(\cdot)$ is a forcing term, which is either constant or a function of $t$ or $\mathbf{u}$.

\subsection{Deterministic model: constant or time-dependent polynomial forcing term}
\label{sec:ODE_systems1}
The system of linear differential equations with constant forcing term
\begin{gather}
\label{eq:ode_constant}
\frac{\text{d}\mathbf{u}}{\text{d}t} = \mathbf{A}\mathbf{u} + \mathbf{b},\qquad \mathbf{u}(0) = \mathbf{u}_{0},
\end{gather}
where $\mathbf{b}$ is a constant vector, has the well-known exact solution
\begin{gather}
\label{eq:exact_solution_constant}
\mathbf{u}(t) = e^{t\mathbf{A}}\mathbf{u}_{0} + t\varphi_{1}(t\mathbf{A})\mathbf{b}.
\end{gather}
This result can also be generalised to a system of linear differential equations with a time-dependent polynomial forcing term \cite{minchev_2005}:
\begin{gather}
\label{eq:ode_polynomial}
\frac{\text{d}\mathbf{u}}{\text{d}t} = \mathbf{A}\mathbf{u} + \sum_{\ell=1}^{p}t^{\ell-1}\mathbf{b}_{\ell},\qquad \mathbf{u}(0) = \mathbf{u}_{0},
\end{gather}
where each $\mathbf{b}_{\ell}\in\mathbb{R}^{n}$ is a constant vector and $p$ is a nonnegative integer. Here, if $p=0$, we adopt the convention that the summation in (\ref{eq:ode_polynomial}) evaluates to the zero vector while $p=1$ recovers (\ref{eq:ode_constant}). Applying the integrating factor $e^{t\mathbf{A}}$ to (\ref{eq:ode_polynomial}) leads to the exact solution:
\begin{gather}
\label{eq:exact_solution_if}
\mathbf{u}(t) = e^{t\mathbf{A}}\mathbf{u}_{0} + \sum_{\ell=1}^{p}\left[\int_{0}^{t}s^{\ell-1}e^{(t-s)\mathbf{A}}\,\text{d}s\right]\mathbf{b}_{\ell},
\end{gather}
which can be expressed in terms of the matrix phi-functions:
\begin{gather}
\label{eq:exact_solution}
\mathbf{u}(t) = e^{t\mathbf{A}}\mathbf{u}_{0} + \sum_{\ell=1}^{p}(\ell-1)!t^{\ell}\varphi_{\ell}(t\mathbf{A})\mathbf{b}_{\ell},
\end{gather}
using the integral representation \cite{minchev_2005}:
\begin{gather}
\label{eq:phit_integral}
\varphi_{\ell}(t\mathbf{A}) = \frac{1}{t^{\ell}(\ell-1)!}\int_{0}^{t} s^{\ell-1}e^{(t-s)\mathbf{A}}\,\text{d}s,
\end{gather}
which follows from (\ref{eq:phi_integral}) and the change of variable $s = \theta t$.
Note that (\ref{eq:exact_solution}) recovers (\ref{eq:exact_solution_constant}) when $p=1$ and can be expressed equivalently in terms of the scaled matrix $\widetilde{\varphi}$-functions (see Propositions \ref{prop:2} and \ref{prop:3}):
\begin{gather}
\label{eq:exact_solution_scaled}
\mathbf{u}(t) = e^{t\mathbf{A}}\mathbf{u}_{0} + \sum_{\ell=1}^{p}\widetilde{\varphi}_{\ell}(t\mathbf{A})[\tfrac{t^{\ell}}{\ell}\mathbf{b}_{\ell}].
\end{gather}
Based on the above results and Propositions \ref{prop:1}--\ref{prop:3} we can draw the following insights:
\begin{enumerate}[(i),itemsep=0pt,topsep=0pt]
\item If $\mathbf{A}$ is essentially nonnegative, then $\mathbf{u}(t)$ (\ref{eq:exact_solution}) is nonnegative if $\mathbf{u}_{0}$ is nonnegative and $\mathbf{b}_{\ell}$ is nonnegative for all $\ell = 1,\hdots,p$ (\ref{eq:exact_solution}). Note that these results are consistent with the well-known result that the solution of $\mathbf{u}^{\prime}(t) = \mathbf{A}\mathbf{u} + \mathbf{b}(t)$ with $\mathbf{u}(0)=\mathbf{u}_{0}$ is nonnegative if $\mathbf{A}$ is essentially nonnegative and both $\mathbf{u}_{0}$ and $\mathbf{b}(t)$ are nonnegative \cite{bellman_1980}. 
\item Consider the $i$th entry/state of $\mathbf{u}(t)$ (\ref{eq:exact_solution}):
\begin{gather*}
[\mathbf{u}(t) ]_{i} = \sum_{j=1}^{n}[e^{t\mathbf{A}}]_{ij}[\mathbf{u}_{0}]_{j} + \sum_{\ell=1}^{p}(\ell-1)!t^{\ell} \sum_{j=1}^{n}[\varphi_{\ell}(t\mathbf{A})]_{ij}[\mathbf{b}_{\ell}]_{j},
\end{gather*}
from which we see that $[\varphi_{\ell}(t\mathbf{A})]_{ij}$ provides a weighting measuring the impact of $[\mathbf{b}_{\ell}]_{j}$ on $[\mathbf{u}(t)]_{i}$. If $\mathbf{A}$ is a row transition-rate matrix then Proposition \ref{prop:2}(iii) means that out of all the entries of $\mathbf{u}(t)$, $[\mathbf{b}_{\ell}]_{j}$ has the most impact on $[\mathbf{u}(t)]_{j}$. On the other hand, if $\mathbf{A}$ is a column transition-rate matrix then Proposition \ref{prop:3}(iii) means that out of of all the entries of $\mathbf{b}_{\ell}$, $[\mathbf{u}(t)]_{i}$ is most impacted by $[\mathbf{b}_{\ell}]_{i}$.
\item Suppose $[\mathbf{u}(t)]_{i}$ represents the amount of some underlying additive quantity $u$ present in state $i$ at time $t$. If $\mathbf{A}$ is a column transition-rate matrix, then $e^{t\mathbf{A}}$ and $\widetilde{\varphi}_{\ell}(t\mathbf{A})$ are column stochastic matrices (Proposition \ref{prop:3}) and from equation (\ref{eq:exact_solution_scaled}) we have that
\begin{align*}
\sum_{i=1}^{n} [\mathbf{u}(t)]_{i} &= \sum_{i=1}^{n}\sum_{j=1}^{n}  [e^{t\mathbf{A}}]_{ij}[\mathbf{u}_{0}]_{j} +\sum_{\ell=1}^{p} \frac{t^{\ell}}{\ell}\sum_{i=1}^{n}\sum_{j=1}^{n}[\widetilde{\varphi}_{\ell}(t\mathbf{A})]_{ij}[\mathbf{b}_{\ell}]_{j}\\
&= \sum_{j=1}^{n}[\mathbf{u}_{0}]_{j} \sum_{i=1}^{n}  [e^{t\mathbf{A}}]_{ij} + \sum_{\ell=1}^{p}\frac{t^{\ell}}{\ell}\sum_{j=1}^{n}[\mathbf{b}_{\ell}]_{j}\sum_{i=1}^{n}[\widetilde{\varphi}_{\ell}(t\mathbf{A})]_{ij}\\ 
&=\sum_{j=1}^{n}[\mathbf{u}_{0}]_{j} + \sum_{\ell=1}^{p}\frac{t^{\ell}}{\ell}\sum_{j=1}^{n}[\mathbf{b}_{\ell}]_{j},
\end{align*}
where we note that the second term is the integral of the forcing term in equation (\ref{eq:ode_polynomial}) from time $0$ to time $t$, that is, $\smash{\sum_{\ell=1}^{p}\frac{t^{\ell}}{\ell}\sum_{j=1}^{n}[\mathbf{b}_{\ell}]_{j} = \sum_{\ell=1}^{p}\sum_{j=1}^{n}[\int_{0}^{t}s^{\ell-1}\mathbf{b}_{\ell}\,\text{d}s]_{j}}$. In words, the total amount of $u$ in the system at time $t$ is given by the total amount of $u$ in the system at time $0$ plus the net amount of $u$ injected into (or removed from) the system between time $0$ and time $t$. For the special case $p = 0$ or $\mathbf{b}_{1} = \cdots = \mathbf{b}_{p} = \mathbf{0}$ (no source/sink term) the total amount of $u$ remains constant and equal to the initial amount for all time $t$, that is, $\smash{\sum_{i=1}^{n} [\mathbf{u}(t)]_{i} = \sum_{i=1}^{n} [\mathbf{u}_{0}]_{i}}$.
\end{enumerate}

\subsection{Deterministic model: time or state dependent forcing term}
\label{sec:ODE_systems2}
Consider a finite time interval $[0,T]$ discretised into $M$ subintervals and let $\tau = T/M>0$ denote the time step. The system of linear differential equations with time dependent forcing term:
\begin{gather}
\label{eq:ode_time_dependent}
\frac{\text{d}\mathbf{u}}{\text{d}t} = \mathbf{A}\mathbf{u} + \mathbf{b}(t),\qquad \mathbf{u}(0) = \mathbf{u}_{0},
\end{gather} 
can be approximated at times $t_{k} = k\tau$ ($k=1,\hdots,M$) by applying the exponential integrator \cite{hochbruck_2010}:
\begin{gather}
\label{eq:exponential_integrator1}
\mathbf{u}_{k+1} = e^{\tau\mathbf{A}}\mathbf{u}_{k} + \tau\varphi_{1}(\tau\mathbf{A})\mathbf{b}(t_{k}) + \tau\varphi_{2}(\tau\mathbf{A})[\mathbf{b}(t_{k+1})-\mathbf{b}(t_{k})],
\end{gather}
for $k = 0,\hdots,M-1$ with $\mathbf{u}_{k}\approx\mathbf{u}(t_{k})$ ($k=1,\hdots,M$). On the other hand, the system of semi-linear differential equations with state dependent forcing term:
\begin{gather}
\label{eq:ode_state_dependent}
\frac{\text{d}\mathbf{u}}{\text{d}t} = \mathbf{A}\mathbf{u} + \mathbf{b}(\mathbf{u}),\qquad \mathbf{u}(0) = \mathbf{u}_{0},
\end{gather} 
can be approximated at the times $t_{k} = k\tau$ ($k=1,\hdots,M$) by applying the exponential integrator \cite{hochbruck_2010,minchev_2005}:
\begin{gather}
\label{eq:exponential_integrator2}
\mathbf{u}_{k+1} = e^{\tau\mathbf{A}}\mathbf{u}_{k} + \tau\varphi_{1}(\tau\mathbf{A})\mathbf{b}(\mathbf{u}_{k}),
\end{gather}
for $k = 0,\hdots,M-1$ with $\mathbf{u}_{k}\approx\mathbf{u}(t_{k})$ ($k=1,\hdots,M$). Note that (\ref{eq:exponential_integrator1}) and (\ref{eq:exponential_integrator2}) can be expressed equivalently in terms of the scaled matrix $\widetilde{\varphi}$-functions (defined in Propositions \ref{prop:2} and \ref{prop:3}), respectively
\begin{align}
\label{eq:exponential_integrator1_scaled}
\mathbf{u}_{k+1} &= e^{\tau\mathbf{A}}\mathbf{u}_{k} + \widetilde{\varphi}_{1}(\tau\mathbf{A})[\tau\mathbf{b}(t_{k})] + \widetilde{\varphi}_{2}(\tau\mathbf{A})[\tfrac{\tau}{2}[\mathbf{b}(t_{k+1})-\mathbf{b}(t_{k})]],\\
\label{eq:exponential_integrator2_scaled}
\mathbf{u}_{k+1} &= e^{\tau\mathbf{A}}\mathbf{u}_{k} + \widetilde{\varphi}_{1}(\tau\mathbf{A})[\tau\mathbf{b}(\mathbf{u}_{k})].
\end{align}
Based on the above results and Propositions \ref{prop:1}--\ref{prop:3} we can draw the following insights:
\begin{enumerate}[(i),itemsep=0pt,topsep=0pt]
\item If $\mathbf{A}$ is essentially nonnegative, $\mathbf{u}_{0}$ is nonnegative, and either $\mathbf{b}(t)$ is nonnegative for all $t\in[0,T]$ (\ref{eq:exponential_integrator1}) or $\mathbf{b}(\mathbf{u})$ is nonnegative for all nonnegative $\mathbf{u}$ (\ref{eq:exponential_integrator2}), then $\mathbf{u}_{k}$ is nonnegative for all $k = 0,1,\hdots,M$. This follows from Proposition \ref{prop:1}(i) and for (\ref{eq:exponential_integrator1}) by rewriting as $\mathbf{u}_{k+1} = e^{\tau\mathbf{A}}\mathbf{u}_{k} + \tau[\varphi_{1}(\tau\mathbf{A})-\varphi_{2}(\tau\mathbf{A})]\mathbf{b}(t_{k}) + \tau\varphi_{2}(\tau\mathbf{A})\mathbf{b}(t_{k+1})$ and using Proposition \ref{prop:1}(ii).
\item Consider the $i$th entry/state of $\mathbf{u}(t)$ from equations (\ref{eq:exponential_integrator1}) and (\ref{eq:exponential_integrator2}), respectively:
\begin{align}
\label{eq:exponential_integrator1_state}
\hspace*{-0.5em}[\mathbf{u}_{k+1}]_{i} &= \sum_{j=1}^{n}[e^{\tau\mathbf{A}}]_{ij}[\mathbf{u}_{k}]_{j} + \tau\sum_{j=1}^{n}[\varphi_{1}(\tau\mathbf{A})]_{ij}[\mathbf{b}(t_{k})]_{j} + \tau\sum_{j=1}^{n}[\varphi_{2}(\tau\mathbf{A})]_{ij}[\mathbf{b}(t_{k+1})-\mathbf{b}(t_{k})]_{j},\\
\hspace*{-0.5em}\label{eq:exponential_integrator2_state}
[\mathbf{u}_{k+1}]_{i} &= \sum_{j=1}^{n}[e^{\tau\mathbf{A}}]_{ij}[\mathbf{u}_{k}]_{j} + \tau\sum_{j=1}^{n}[\varphi_{1}(\tau\mathbf{A})]_{ij}[\mathbf{b}(\mathbf{u}_{k})]_{j}.
\end{align}
Here we see that $[\varphi_{1}(\tau\mathbf{A})]_{ij}$ and $[\varphi_{2}(\tau\mathbf{A})]_{ij}$ provide weightings measuring the impact of individual entries of the forcing term on $[\mathbf{u}_{k+1}]_{i}$. For example, in equation (\ref{eq:exponential_integrator2_state}), $[\varphi_{1}(\tau\mathbf{A})]_{ij}$ provides a weighting measuring the impact of the value of $[\mathbf{b}(\mathbf{u}_{k})]_{j}$ on $[\mathbf{u}_{k+1}]_{i}$. If $\mathbf{A}$ is a row transition-rate matrix then Proposition \ref{prop:2}(iii) means that out of all the entries of $\mathbf{u}_{k+1}$, $[\mathbf{b}(\mathbf{u}_{k})]_{j}$ has the most impact on $[\mathbf{u}_{k+1}]_{j}$. On the other hand, if $\mathbf{A}$ is a column transition-rate matrix then Proposition \ref{prop:3}(iii) means that out of of all the entries of $\mathbf{b}(\mathbf{u}_{k})$, $[\mathbf{u}_{k+1}]_{i}$ is most impacted by $[\mathbf{b}(\mathbf{u}_{k})]_{i}$. Similar insights can also be provided for equation (\ref{eq:exponential_integrator1_state}).
\item Suppose $[\mathbf{u}_{k+1}]_{i}$ represents the amount of some underlying additive quantity $u$ present in state $i$ at time $t$. If $\mathbf{A}$ is a column transition-rate matrix, then $e^{\tau\mathbf{A}}$ and $\widetilde{\varphi}_{\ell}(\tau\mathbf{A})$ are column stochastic matrices (Proposition \ref{prop:3}) and from equations (\ref{eq:exponential_integrator1_scaled}) and (\ref{eq:exponential_integrator2_scaled}) we have, respectively
\begin{align}
\label{eq:exponential_integrator1_conservation}
\sum_{i=1}^{n} [\mathbf{u}_{k+1}]_{i} &=\sum_{j=1}^{n}[\mathbf{u}_{k}]_{j} + \frac{\tau}{2}\sum_{j=1}^{n}[\mathbf{b}(t_{k})+\mathbf{b}(t_{k+1})]_{j},\\
\label{eq:exponential_integrator2_conservation}
\sum_{i=1}^{n}[\mathbf{u}_{k+1} ]_{i} &= \sum_{j=1}^{n}[\mathbf{u}_{k}]_{j} + \tau\sum_{j=1}^{n}[\mathbf{b}(\mathbf{u}_{k})]_{j}.
\end{align}
Note that the second term on the right-hand side of equations (\ref{eq:exponential_integrator1_conservation}) and (\ref{eq:exponential_integrator2_conservation}) approximate the integral from time $t_{k}$ to time $t_{k+1}$ of the forcing term in equations (\ref{eq:ode_time_dependent}) and (\ref{eq:ode_state_dependent}), respectively, that is, $\smash{\frac{\tau}{2}\sum_{j=1}^{n}[\mathbf{b}(t_{k})+\mathbf{b}(t_{k+1})]_{j} \approx \sum_{j=1}^{n}[\int_{t_{k}}^{t_{k+1}}\mathbf{b}(s)\,\text{d}s]_{j}}$ (trapezoidal rule with one sub-interval) and ${\tau\sum_{j=1}^{n}[\mathbf{b}(\mathbf{u}_{k})]_{i} \approx \sum_{j=1}^{n}[\int_{t_{k}}^{t_{k+1}}\mathbf{b}(\mathbf{u}(s))\,\text{d}s]_{j}}$ (left-hand rule with one sub-interval). In words, the total amount of $u$ in the system at time $t_{k+1}$ is given by the total amount of $u$ in the system at time $t_{k}$ plus an approximation to the net amount of $u$ injected into (or removed from) the system between time $t_{k}$ and time $t_{k+1}$. Note that (\ref{eq:exponential_integrator1_scaled}) utilises the decomposition $\frac{\tau}{2}[\mathbf{b}(t_{k})+\mathbf{b}(t_{k+1})]  = \tau\mathbf{b}(t_{k}) + \tfrac{\tau}{2}[\mathbf{b}(t_{k+1})-\mathbf{b}(t_{k})]$.
\end{enumerate}

\begin{figure}[p]
\algbox{
\begin{algorithm}[Stochastic model -- constant or polynomial forcing term]\mbox{}\\
\label{alg:stochastic_model1}
\emph{
\begin{tabular}{@{}l}
Input matrix $\mathbf{A}$ and vectors $\mathbf{u}_{0}$ and $\mathbf{b}_{\ell}$ for $\ell = 1,\hdots,p$, where $p$ is a positive integer\\
Choose stochastic model discretisation parameter $N_{s}$\\
Choose times of interest $t_{1},\hdots,t_{K}$, where $K$ is a positive integer\\
\textbf{for} $k = 1,\hdots,K$\\
\qquad Set $t =t_{k}$ and initialise $\widehat{\mathbf{u}}(t)=\mathbf{0}$\\ 
\qquad Compute $\varphi_{\ell}(t\mathbf{A})$ and set $\widetilde{\varphi}_{\ell}(t\mathbf{A}) = \ell!\varphi_{\ell}(t\mathbf{A})$ for all $\ell = 0,1,\hdots,p$\\
\qquad Cumulative probabilities $[\mathbf{P}_{\ell}]_{0,j} = 0$ and $[\mathbf{P}_{\ell}]_{ij} = \sum_{m=1}^{i}[\widetilde{\varphi}_{\ell}(t\mathbf{A})]_{mj}$ for all $j=1,\hdots,n$ and $\ell = 0,1,\hdots,p$\\
\qquad\textbf{for} $\ell = 0,1,\hdots,p$\\
\qquad\qquad Set $\mathbf{v}=\mathbf{u}_{0}$ if $\ell = 0$ or $\mathbf{v} = t^{\ell}\mathbf{b}_{\ell}/\ell$ otherwise\\
\qquad\qquad\textbf{for} $j = 1,\hdots,n$\\
\qquad\qquad\qquad Set $N([\mathbf{v}]_{j}) = \sgn([\mathbf{v}]_{j})\lceil|N_{s}[\mathbf{v}]_{j}|\rceil$\\ 
\qquad\qquad\qquad Generate random number $r\sim U(0,1)$\\
\qquad\qquad\qquad Find $i$ such that $[\mathbf{P}_{\ell}]_{i-1,j} < r < [\mathbf{P}_{\ell}]_{i,j}$\\
\qquad\qquad\qquad $[\widehat{\mathbf{u}}(t)]_{i} = [\widehat{\mathbf{u}}(t)]_{i} + [\mathbf{v}]_{j}/N([\mathbf{v}]_{j})$\\
\qquad\qquad\textbf{end}\\
\qquad\textbf{end}\\
\textbf{end}\\
Return $\widehat{\mathbf{u}}(t_{k})$ for $k=1,\hdots,K$
\end{tabular}
}
\end{algorithm}}

\noindent
\algbox{
\begin{algorithm}[Stochastic model -- general time varying forcing term]\mbox{}\\
\label{alg:stochastic_model2}
\emph{
\begin{tabular}{@{}l}
Input matrix $\mathbf{A}$, vector $\mathbf{u}_{0}$, vector function $\mathbf{b}(t)$\\
Choose stochastic model discretisation parameter $N_{s}$\\
Choose time duration $T$ and number of time steps $M$\\
Compute time step $\tau = T/M$ and discrete times $t_{k}=k\tau$ for all $k=0,\hdots,M$\\
Compute $\varphi_{\ell}(\tau\mathbf{A})$ and set $\widetilde{\varphi}_{\ell}(\tau\mathbf{A}) = \ell!\varphi_{\ell}(\tau\mathbf{A})$ for all $\ell = 0,1,2$ and initialise $\widehat{\mathbf{u}}_{0}=\mathbf{u}_{0}$\\
Cumulative probabilities $[\mathbf{P}_{\ell}]_{0,j} = 0$ and $[\mathbf{P}_{\ell}]_{ij} = \sum_{m=1}^{i}[\widetilde{\varphi}_{\ell}(\tau\mathbf{A})]_{mj}$ for all $j=1,\hdots,n$ and $\ell = 0,1,2$\\
\textbf{for} $k = 0,\hdots,M-1$\\
\qquad Initialise $\widehat{\mathbf{u}}_{k+1}=\mathbf{0}$\\ 
\qquad\textbf{for} $\ell = 0,1,2$\\
\qquad\qquad Set $\mathbf{v} = \widehat{\mathbf{u}}_{k}$ if $\ell = 0$, $\mathbf{v} = \tau\mathbf{b}(t_{k})$ if $\ell=1$ or $\mathbf{v} = \tau(\mathbf{b}(t_{k+1})-\mathbf{b}(t_{k}))/2$ if $\ell = 2$\\
\qquad\qquad\textbf{for} $j = 1,\hdots,n$\\
\qquad\qquad\qquad Set $N([\mathbf{v}]_{j}) = \sgn([\mathbf{v}]_{j})\lceil|N_{s}[\mathbf{v}]_{j}|\rceil$\\ 
\qquad\qquad\qquad Generate random number $r\sim U(0,1)$\\
\qquad\qquad\qquad Find $i$ such that $[\mathbf{P}_{\ell}]_{i-1,j} < r < [\mathbf{P}_{\ell}]_{i,j}$\\
\qquad\qquad\qquad $[\widehat{\mathbf{u}}_{k+1}]_{i} = [\widehat{\mathbf{u}}_{k+1}]_{i} + [\mathbf{v}]_{j}/N([\mathbf{v}]_{j})$\\
\qquad\qquad\textbf{end}\\
\qquad\textbf{end}\\
\textbf{end}\\
Return $\widehat{\mathbf{u}}_{k}$ for $k=1,\hdots,M$
\end{tabular}
}
\end{algorithm}}

\noindent
\algbox{
\begin{algorithm}[Stochastic model -- general state varying forcing term]\mbox{}\\
\label{alg:stochastic_model3}
\emph{
\begin{tabular}{@{}l}
Input matrix $\mathbf{A}$, vector $\mathbf{u}_{0}$, vector function $\mathbf{b}(\mathbf{u})$\\
Choose stochastic model discretisation parameter $N_{s}$\\
Choose time duration $T$ and number of time steps $M$, and compute time step $\tau = T/M$\\
Compute $\varphi_{\ell}(\tau\mathbf{A})$ and set $\widetilde{\varphi}_{\ell}(\tau\mathbf{A}) = \ell!\varphi_{\ell}(\tau\mathbf{A})$ for all $\ell = 0,1$ and initialise $\widehat{\mathbf{u}}_{0}=\mathbf{u}_{0}$\\
Cumulative probabilities $[\mathbf{P}_{\ell}]_{0,j} = 0$ and $[\mathbf{P}_{\ell}]_{ij} = \sum_{m=1}^{i}[\widetilde{\varphi}_{\ell}(\tau\mathbf{A})]_{mj}$ for all $j=1,\hdots,n$ and $\ell=0,1$\\
\textbf{for} $k = 0,\hdots,M-1$\\
\qquad Initialise $\widehat{\mathbf{u}}_{k+1}=\mathbf{0}$\\ 
\qquad\textbf{for} $\ell = 0,1$\\
\qquad\qquad Set $\mathbf{v} = \widehat{\mathbf{u}}_{k}$ if $\ell=0$ or $\mathbf{v} = \tau\mathbf{b}(\widehat{\mathbf{u}}_{k})$ if $\ell =1$\\
\qquad\qquad\textbf{for} $j = 1,\hdots,n$\\
\qquad\qquad\qquad Set $N([\mathbf{v}]_{j}) = \sgn([\mathbf{v}]_{j})\lceil|N_{s}[\mathbf{v}]_{j}|\rceil$\\ 
\qquad\qquad\qquad Generate random number $r\sim U(0,1)$\\
\qquad\qquad\qquad Find $i$ such that $[\mathbf{P}_{\ell}]_{i-1,j} < r < [\mathbf{P}_{\ell}]_{i,j}$\\
\qquad\qquad\qquad $[\widehat{\mathbf{u}}_{k+1}]_{i} = [\widehat{\mathbf{u}}_{k+1}]_{i} + [\mathbf{v}]_{j}/N([\mathbf{v}]_{j})$\\
\qquad\qquad\textbf{end}\\
\qquad\textbf{end}\\
\textbf{end}\\
Return $\widehat{\mathbf{u}}_{k}$ for $k=1,\hdots,M$
\end{tabular}}
\end{algorithm}}
\end{figure}

\subsection{Stochastic models}
\label{sec:stochastic_models}
Consider the exact solution (\ref{eq:exact_solution_scaled}) and suppose $[\mathbf{u}(t)]_{i}$ can be partitioned into $N([\mathbf{u}(t)]_{i})$ individual discrete units (e.g. cells, particles) that act independently. Similarly, consider the exponential integrators (\ref{eq:exponential_integrator1_scaled}) and (\ref{eq:exponential_integrator2_scaled}) and suppose $[\mathbf{u}_{k}]_{i}$ can be partitioned into $N([\mathbf{u}_{k}]_{i})$ individual discrete units that act independently. In all cases, $N(\mu) = \lceil|N_{s}\mu|\rceil$ is the smallest nonnegative integer greater than $|N_{s}\mu|$, and $N_{s}$ controls the level of discretisation, denoting the number of individual discrete units when $\mu = 1$. If $\mathbf{A}$ is a column transition-rate matrix, then $e^{t\mathbf{A}}$ and $\widetilde{\varphi}_{\ell}(t\mathbf{A})$ are column stochastic matrices for all $t>0$ and the exact solution (\ref{eq:exact_solution_scaled}) can be used to generate continuous-time stochastic simulations of the deterministic model (\ref{eq:ode_polynomial}). Similarly, if $\mathbf{A}$ is a column transition-rate matrix, then $e^{\tau\mathbf{A}}$ and $\widetilde{\varphi}_{\ell}(\tau\mathbf{A})$ are column stochastic matrices for all $\tau>0$ and the exponential integrators (\ref{eq:exponential_integrator1_scaled}) and (\ref{eq:exponential_integrator2_scaled}) can be used to generate discrete-time stochastic simulations of the deterministic models (\ref{eq:ode_time_dependent}) and (\ref{eq:ode_state_dependent}), respectively. These three processes are described in Algorithms \ref{alg:stochastic_model1}, \ref{alg:stochastic_model2} and \ref{alg:stochastic_model3}, respectively, where $\widehat{\mathbf{u}}(t)$ represents a stochastic realisation of $\mathbf{u}(t)$ (Algorithm 1), $\widehat{\mathbf{u}}_{k}$ represents a stochastic realisation of $\mathbf{u}_{k}$ (Algorithms 2 and 3), and the entries of $[e^{t\mathbf{A}}]_{ij}$ and $[\widetilde{\varphi}_{\ell}(t\mathbf{A})]_{ij}$ (Algorithm 1) and $[e^{\tau\mathbf{A}}]_{ij}$ and $[\widetilde{\varphi}_{\ell}(\tau\mathbf{A})]_{ij}$ (Algorithms 2 and 3) govern probabilities of exchanges between states:
\begin{itemize}[itemsep=0pt,topsep=0pt]
\item Algorithm 1 (Stochastic model – constant or polynomial forcing term)\\
$[e^{t\mathbf{A}}]_{ij}$ is the probability that $[\mathbf{u}_{0}]_{j}$ contributes $1/N([\mathbf{u}_{0}]_{j})$ of its value to $[\widehat{\mathbf{u}}(t)]_{i}$, and $[\widetilde{\varphi}_{\ell}(t\mathbf{A})]_{ij}$ is the probability that $\frac{t^{\ell}}{\ell}[\mathbf{b}_{\ell}(t)]_{j}$ contributes $1/N(\frac{t^{\ell}}{\ell}[\mathbf{b}_{\ell}(t)]_{j})$ of its value to $[\widehat{\mathbf{u}}(t)]_{i}$ for all $\ell = 1,\hdots,p$.
\item Algorithm 2 (Stochastic model – general time varying forcing term)\\
$[e^{\tau\mathbf{A}}]_{ij}$ is the probability that $[\widehat{\mathbf{u}}_{k}]_{j}$ contributes $1/N([\widehat{\mathbf{u}}_{k}]_{j})$ of its value to $[\widehat{\mathbf{u}}_{k+1}]_{i}$, $[\widetilde{\varphi}_{1}(\tau\mathbf{A})]_{ij}$ is the probability that $[\tau\mathbf{b}(t_{k})]_{j}$ contributes $1/N(\tau[\mathbf{b}(t_{k})]_{j})$ of its value to $[\widehat{\mathbf{u}}_{k+1}]_{i}$, and $[\widetilde{\varphi}_{2}(\tau\mathbf{A})]_{ij}$ is the probability that $[\tau(\mathbf{b}(t_{k+1})-\mathbf{b}(t_{k}))]_{j}$ contributes $1/N([\tau(\mathbf{b}(t_{k+1})-\mathbf{b}(t_{k}))]_{j})$ of its value to $[\widehat{\mathbf{u}}_{k+1}]_{i}$.
\item Algorithm 3 (Stochastic model – general state varying forcing term)\\
$[e^{\tau\mathbf{A}}]_{ij}$ is the probability that $[\widehat{\mathbf{u}}_{k}]_{j}$ contributes $1/N([\widehat{\mathbf{u}}_{k}]_{j})$ of its value to $[\widehat{\mathbf{u}}_{k+1}]_{i}$ and $[\widetilde{\varphi}_{1}(\tau\mathbf{A})]_{ij}$ is the probability that $[\tau\mathbf{b}(\widehat{\mathbf{u}}_{k})]_{j}$ contributes $1/N([\tau\mathbf{b}(\widehat{\mathbf{u}}_{k})]_{j})$ of its value to $[\widehat{\mathbf{u}}_{k+1}]_{i}$.
\end{itemize}
A key attraction of the above stochastic models is that neither time $t$ (Algorithm \ref{alg:stochastic_model1}) nor the time step $\tau$ (Algorithms \ref{alg:stochastic_model2} and \ref{alg:stochastic_model3}) need to be constrained to ensure probabilities are in $[0,1]$, unlike stochastic models based on standard explicit time discretisation methods, where the time step is considerably restricted by such probability constraints \cite{carr_2025,carwood_2026}.

\normalsize
\section{Examples}
\label{sec:examples}
We now present some illustrative examples of the deterministic and stochastic models considered in Sections \ref{sec:application}. In total, we consider seven problems, each arising from spatial discretisation of a partial differential equation model:

\begin{itemize}[itemsep=0pt,topsep=0pt]
\item Problem 1: Diffusion equation with spatially-varying source term\\ 
$c_{t} = Dc_{xx} + R(x)$ subject to $c(x,0) = f(x)$, $c_{x}(0,t) = 0$, $c_{x}(L,t) = 0$, where $D = 0.004$, $R(x) = \exp(-3000(x-0.5)^2)$ and $f(x)=0$.
\item Problem 2: Diffusion equation with spatially-varying diffusivity and polynomial boundary fluxes\\
$c_{t} = \left(D(x)c_{x}\right)_{x}$ subject to $c(x,0) = f(x)$,  $D(0)c_{x}(0,t) = p_{0}(t)$, $D(L)c_{x}(L,t) = p_{L}(t)$, where $D(x)=0.1[1-0.8(H(x-0.25)-H(x-0.75))]$, $p_{0}(t) = \alpha_{1}+\alpha_{2}t$, $p_{L}(t) = \beta_{1}+\beta_{2}t$ and $f(x) = 0$ with $\alpha_{1} = \alpha_{2} = \beta_{1} = 0.15$, $\beta_{2} = 0.1$ and $H(\cdot)$ denoting the Heaviside function.
\item Problem 3: Advection diffusion equation with non-polynomial boundary flux\\
$c_{t} = Dc_{xx} - vc_{x}$ subject to $c(x,0) = f(x)$, $Dc_{x}(0,t) - vc(0,t) = Q(t)$, $Dc_{x}(L,t) - vc(L,t) = 0$, where $D=0.01$, $v=0.7$, $f(x) = 0$ and $Q(t)=60te^{-20t}$.
\item Problem 4: Diffusion equation with Robin boundary conditions\\
$c_{t} = Dc_{xx}$ subject to $c(x,0) = f(x)$, $Dc_{x}(0,t) = \sigma(c(0,t)-C_{0})$, $Dc_{x}(L,t) = \sigma(C_{L}-c(L,t))$, where $D=0.2$, $f(x)=0$, $\sigma=2.5$, $C_{0}=1$ and $C_{L}=0.5$.
\item Problem 5: Fisher-KPP equation\\
$c_{t} = Dc_{xx} + R(c)$, subject to $c(x,0) = f(x)$, $c_{x}(0,t) = 0$, $c_{x}(L,t) = 0$, where $D=0.03$, $R(c) = 9c(1-c)$ and $f(x) = 0.1(H(x-0.45)-H(x-0.55))$.
\item Problem 6: Allen-Cahn equation\\
$c_{t} = Dc_{xx} + R(c)$, subject to $c(x,0) = f(x)$, $c_{x}(0,t) = 0$, $c_{x}(L,t) = 0$, where $D=0.001$, $R(c) = 15c(1-c)(c-0.5)$ and $f(x) = 0.1(\exp(-40(x-0.5)^{2}))+0.45$.
\item Problem 7: 2D Allen-Cahn equation\\
$c_{t} = D(c_{xx}+c_{yy}) + R(c)$, subject to $c(x,y,0) = f(x,y)$, $c_{x}(0,y,t) = 0$, $c_{x}(L,y,t) = 0$, $c_{y}(x,0,t) = 0$, $c_{y}(x,L,t) = 0$, where $D=0.001$, $R(c) = 30c(1-c)(c-0.5)$ and $f(x,y)\sim U(0,1)$.
\end{itemize}

Spatial discretisation is carried out using a standard finite volume method. For the 1D problems, the spatial domain $[0,L]$ is discretised using $n$ uniformly-spaced nodes, $x_{i} = (i-1)h$ for $i = 1,\hdots,n$ where $h = L/(n-1)>0$. The 1D finite volume surrounding node $i$ is then defined by the interval $\Omega_{i} = [x_{i}^{-},x_{i}^{+}]$ of length $V_{i} = x_{i}^{+}-x_{i}^{-}$, where $x_{i}^{-}=\max(0,x_{i}-h/2)$ and $x_{i}^{+} = \min(x_{i}+h/2,L)$. For the 2D problem, the spatial domain $[0,L]\times[0,L]$ is discretised using an $m\times m$ uniformly-spaced grid with nodes at $(x,y) = (\widehat{x}_{i},\widehat{y}_{j})$ for $i = 1,\hdots,m$ and $j = 1,\hdots,m$, where $\widehat{x}_{i} = (i-1)h$, $\widehat{y}_{j} = (j-1)h$ and $h = L/(m-1)$. The $n = m^{2}$ nodes are ordered left-to-right bottom-to-top such that node $i$ is located at position $x = x_{i} = (p(i)-1)h$ and $y = y_{i}= (q(i)-1)h$ where $p(i) = i - \lfloor\frac{i-1}{m}\rfloor m$ and $q(i) = \lfloor\frac{i-1}{m}\rfloor + 1$. The 2D finite volume surrounding node $i$ is then defined by the square region $\Omega_{i}=[x_{i}^{-},x_{i}^{+}]\times[y_{i}^{-},y_{i}^{+}]$ of area $V_{i} = (x_{i}^{+}-x_{i}^{-})(y_{i}^{+}-y_{i}^{-})$, where $x_{i}^{-} = \max(0,x_{i}-h/2)$, $x_{i}^{+} = \min(x_{i}+h/2,L)$, $y_{i}^{-} = \max(0,y_{i}-h/2)$ and $y_{i}^{+} = \min(y_{i}+h/2,L)$. 

With the $n$ nodes/states defined, we consider a solution/state vector $\mathbf{c}(t)$, where $[\mathbf{c}(t)]_{i} = c_{i} \approx c(x_{i},t)$ (Problems 1--6) or $c(x_{i},y_{i},t)$ (Problem 7). We then integrate the governing PDE over each finite volume, insert the boundary conditions and invoke standard approximations (central differences for diffusion terms, averaging for any advection terms etc). In summary, for each problem, this yields a system of differential equations in $\mathbf{c}(t)$ with initial condition $\mathbf{c}(0) = \mathbf{c}_{0} = (f(x_{1}),\hdots,f(x_{n}))^{T}$ (Problems 1--6) or $(f(x_{1},y_{1}),\hdots,f(x_{n},y_{n}))^{T}$ (Problem 7):

\begin{itemize}[itemsep=0pt,topsep=0pt]
\item Problem 1: $\mathbf{c}^{\prime}(t) = \widetilde{\mathbf{A}}\mathbf{c} + \widetilde{\mathbf{b}}$ where $\widetilde{\mathbf{b}} = (R(x_{1}),\hdots,R(x_{n}))^{T}$ and $\widetilde{\mathbf{A}}$ is a tridiagonal matrix with nonzero entries $[\widetilde{\mathbf{A}}]_{1,1} = -2D/h^{2}$ and $[\widetilde{\mathbf{A}}]_{1,2} = 2D/h^{2}$ (first row); $[\widetilde{\mathbf{A}}]_{i,i-1} = D/h^{2}$, $[\widetilde{\mathbf{A}}]_{i,i} = -2D/h^{2}$ and $[\widetilde{\mathbf{A}}]_{i,i+1} = D/h^{2}$ for $i = 2,\hdots,n-1$ (middle rows); and $[\widetilde{\mathbf{A}}]_{n,n-1} = 2D/h^{2}$, $[\widetilde{\mathbf{A}}]_{n,n} = -2D/h^{2}$ (last row). Note that $\widetilde{\mathbf{A}}$ is a row transition-rate matrix. 
\item Problem 2: $\mathbf{c}^{\prime}(t) = \widetilde{\mathbf{A}}\mathbf{c} + \widetilde{\mathbf{b}}_{1} + t\widetilde{\mathbf{b}}_{2}$ where $\widetilde{\mathbf{b}}_{\ell} = (2\alpha_{\ell}/h,0,\hdots,0,2\beta_{\ell}/h)$ for each $\ell\in\{1,2\}$ and $\widetilde{\mathbf{A}}$ is a tridiagonal matrix with non-zero entries $[\widetilde{\mathbf{A}}]_{1,1} = -2D_{1}/h^{2}$ and $[\widetilde{\mathbf{A}}]_{1,2} = 2D_{1}/h^{2}$ (first row); $[\widetilde{\mathbf{A}}]_{i,i-1} = D_{i-1}/h^{2}$, $[\widetilde{\mathbf{A}}]_{i,i} = -(D_{i-1}+D_{i})/h^{2}$ and $[\widetilde{\mathbf{A}}]_{i,i+1} = D_{i}/h^{2}$ for $i = 2,\hdots,n-1$ (middle rows); and $[\widetilde{\mathbf{A}}]_{n,n-1} = 2D_{n-1}/h^{2}$, $[\widetilde{\mathbf{A}}]_{n,n} = -2D_{n-1}/h^{2}$ (last row), where $D_{i} = D((x_{i}+x_{i+1})/2)$. Note that $\widetilde{\mathbf{A}}$ is a row transition-rate matrix. 
\item Problem 3: $\mathbf{c}^{\prime}(t) = \widetilde{\mathbf{A}}\mathbf{c} + \widetilde{\mathbf{b}}(t)$ where $\widetilde{\mathbf{b}}(t) = (2Q(t)/h,0,\hdots,0)^{T}$ and $\widetilde{\mathbf{A}}$ is a tridiagonal matrix with non-zero entries $[\widetilde{\mathbf{A}}]_{1,1} = -(2D+vh)/h^{2}$ and $[\widetilde{\mathbf{A}}]_{1,2} = (2D-vh)/h^{2}$ (first row); $[\widetilde{\mathbf{A}}]_{i,i-1} = (2D+vh)/(2h^{2})$, $[\widetilde{\mathbf{A}}]_{i,i} = -2D/h^{2}$ and $[\widetilde{\mathbf{A}}]_{i,i+1} = (2D-vh)/(2h^{2})$ for $i = 2,\hdots,n-1$ (middle rows); $[\widetilde{\mathbf{A}}]_{n,n-1} = (2D+vh)/h^{2}$, $[\widetilde{\mathbf{A}}]_{n,n} = -(2D-vh)/h^{2}$ (last row). Note that $\widetilde{\mathbf{A}}$ is not a row-transition rate matrix but is only essentially nonnegative if $0 < h \leq 2D/|v|$. 
\item Problems 4--6: $\mathbf{c}^{\prime}(t) = \widetilde{\mathbf{A}}\mathbf{c} + \widetilde{\mathbf{b}}(\mathbf{c})$ where $\widetilde{\mathbf{A}}$ is the row transition-rate matrix defined in Problem 1, and $\widetilde{\mathbf{b}}(\mathbf{c}) = (2\sigma(C_{0}-c_{1})/h,0,\hdots,0,2\sigma(C_{L}-c_{n})/h)^{T}$ (Problem 4) or $\widetilde{\mathbf{b}}(\mathbf{c}) = (R(c_{1}),\hdots,R(c_{n}))^{T}$ (Problems 5--6). 
\item Problem 7: $\mathbf{c}^{\prime}(t) = \widetilde{\mathbf{A}}\mathbf{c} + \widetilde{\mathbf{b}}(\mathbf{c})$ where $\widetilde{\mathbf{A}}$ is the usual five-point stencil matrix obtained by discretising the diffusion term with zero flux boundary conditions, where $\smash{[\widetilde{\mathbf{A}}]_{ij} = \frac{D\alpha_{i,j}}{V_{i}}}$ if $j\in S_{i}$, $[\widetilde{\mathbf{A}}]_{ij} = -\frac{D}{V_{i}}\sum_{k\in S_{i}}\alpha_{i,k}$ if $j=i$, or $[\widetilde{\mathbf{A}}]_{ij} = 0$ otherwise, where $S_{i} = \{i-1 \, |\,x_{i} > 0\}\cup\{i+1 \, |\,x_{i} < L\}\cup\{i-m \, |\,y_{i} > 0\}\cup\{i+m \, |\,y_{i} < L\}$ is the set of nodes connected to node $i$, and $\alpha_{i,j} = 1/2$ if both nodes $i$ and $j$ lie on the domain boundary (i.e. ($x_{i}\in\{0,L\}$ or $y_{i}\in\{0,L\}$) and ($x_{j}\in\{0,L\}$ or $y_{j}\in\{0,L\}$), or $\alpha_{i,j} = 1$ otherwise. Note that $\widetilde{\mathbf{A}}$ is a row transition-rate matrix since it is essentially nonnegative and has zero row sums with $\sum_{j=1}^{n}[\mathbf{A}]_{ij} = [\mathbf{A}]_{ii} + \sum_{j\in S_{i}}[\mathbf{A}]_{ij} = 0$ for all $i = 1,\hdots,n$.
\end{itemize}

Following Sections \ref{sec:ODE_systems1} and \ref{sec:ODE_systems2}, we compute solutions to the above systems using either the exact solutions $\smash{\mathbf{c}(t) = e^{t\widetilde{\mathbf{A}}}\mathbf{c}_{0} + t\varphi_{1}(t\widetilde{\mathbf{A}})\widetilde{\mathbf{b}}}$ (Problem 1) and $\smash{\mathbf{c}(t) = e^{t\widetilde{\mathbf{A}}}\mathbf{u}_{0} + t\varphi_{1}(t\widetilde{\mathbf{A}})\widetilde{\mathbf{b}}_{1} + t^{2}\varphi_{2}(t\widetilde{\mathbf{A}})\widetilde{\mathbf{b}}_{2}}$ (Problem 2) or the approximate solutions $\smash{\mathbf{c}_{k+1} = e^{\tau\widetilde{\mathbf{A}}}\mathbf{c}_{k} + \tau\varphi_{1}(\tau\widetilde{\mathbf{A}})\widetilde{\mathbf{b}}(t_{k}) + \tau\varphi_{2}(\tau\widetilde{\mathbf{A}})[\widetilde{\mathbf{b}}(t_{k+1})-\widetilde{\mathbf{b}}(t_{k})]}$ (Problem 3) and $\smash{\mathbf{c}_{k+1} = e^{\tau\widetilde{\mathbf{A}}}\mathbf{c}_{k} + \tau\varphi_{1}(\tau\widetilde{\mathbf{A}})\widetilde{\mathbf{b}}(\mathbf{c}_{k})}$ (Problems 4--7). Next, we convert each system into one involving a column transition-rate matrix by introducing the change of variables $\mathbf{c}(t) = \mathbf{V}^{-1}\mathbf{u}(t)$, where $\mathbf{V} = \text{diag}(V_{1},\hdots,V_{n})\in\mathbb{R}^{n\times n}$ and $\mathbf{V}^{-1} = \text{diag}(V_{1}^{-1},\hdots,V_{n}^{-1})$, recalling that $V_{i}$ is the length/area of the finite volume $\Omega_{i}$. For each problem, this produces a system of differential equations in $\mathbf{u}(t)$ with initial condition $\smash{\mathbf{u}(0)=\mathbf{u}_{0}=\mathbf{V}^{-1}\mathbf{c}_{0}}$ where $\smash{\mathbf{A}=\mathbf{V}\widetilde{\mathbf{A}}\mathbf{V}^{-1}}$:
\begin{itemize}[itemsep=0pt,topsep=0pt]
\item Problem 1: $\mathbf{u}^{\prime}(t) = \mathbf{A}\mathbf{u} + \mathbf{b}$ where $\mathbf{b} = \mathbf{V}\widetilde{\mathbf{b}}$ and $\mathbf{A}$ is a tridiagonal column-transition rate matrix with nonzero entries: $[\mathbf{A}]_{1,1} = -2D/h^{2}$ and $[\mathbf{A}]_{2,1} = 2D/h^{2}$ (first column); $[\mathbf{A}]_{j-1,j} = D/h^{2}$, $[\mathbf{A}]_{j,j} = -2D/h^{2}$ and $[\mathbf{A}]_{j+1,j} = D/h^{2}$ for $j = 2,\hdots,n-1$ (middle columns); $[\mathbf{A}]_{n-1,n} = 2D/h^{2}$, $[\mathbf{A}]_{n,n} = -2D/h^{2}$ (last column). 
\item Problem 2: $\mathbf{u}^{\prime}(t) = \mathbf{A}\mathbf{u} + \mathbf{b}_{1} + t\mathbf{b}_{2}$ where $\mathbf{b}_{\ell} = \mathbf{V}\widetilde{\mathbf{b}}_{\ell}$ for each $\ell\in\{1,2\}$ and $\mathbf{A}$ is a tridiagonal column-transition rate matrix with nonzero entries: $[\mathbf{A}]_{1,1} = -2D_{1}/h^{2}$ and $[\mathbf{A}]_{2,1} = 2D_{1}/h^{2}$ (first column); $[\mathbf{A}]_{j-1,j} = D_{j-1}/h^{2}$, $[\mathbf{A}]_{j,j} = -(D_{j-1}+D_{j})/h^{2}$ and $[\mathbf{A}]_{j+1,j} = D_{j}/h^{2}$ for $j = 2,\hdots,n-1$ (middle columns); $[\mathbf{A}]_{n-1,n} = 2D_{n-1}/h^{2}$, $[\mathbf{A}]_{n,n} = -2D_{n-1}/h^{2}$ (last column). 
\item Problem 3: $\mathbf{u}^{\prime}(t) = \mathbf{A}\mathbf{u} + \mathbf{b}(t)$ where $\mathbf{b}(t) = \mathbf{V}\widetilde{\mathbf{b}}(t)$ and $\mathbf{A}$ is a tridiagonal column-transition rate matrix (provided $h\leq 2D/|v|$) with nonzero entries: $[\mathbf{A}]_{1,1} = -(2D+vh)/h^{2}$ and $[\mathbf{A}]_{2,1} = (2D+vh)/h^{2}$ (first column); $[\mathbf{A}]_{j-1,j} = (2D-vh)/(2h^{2})$, $[\mathbf{A}]_{j,j} = -2D/h^{2}$ and $[\mathbf{A}]_{j+1,j} = (2D+vh)/(2h^{2})$ for $j = 2,\hdots,n-1$ (middle columns); $[\mathbf{A}]_{n-1,n} = (2D-vh)/h^{2}$, $[\mathbf{A}]_{n,n} = -(2D-vh)/h^{2}$ (last column). 
\item Problems 4--6: $\mathbf{u}^{\prime}(t) = \mathbf{A}\mathbf{u} + \mathbf{b}(\mathbf{u})$ where $\mathbf{b}(\mathbf{u}) = \mathbf{V}\widetilde{\mathbf{b}}(\mathbf{V}^{-1}\mathbf{u})$ and $\mathbf{A}$ is the column transition-rate matrix defined in Problem 1. 
\item Problem 7: $\mathbf{u}^{\prime}(t) = \mathbf{A}\mathbf{u} + \mathbf{b}(\mathbf{u})$ where $\mathbf{b}(\mathbf{u}) = \mathbf{V}\widetilde{\mathbf{b}}(\mathbf{V}^{-1}\mathbf{u})$ and $\mathbf{A}$ is a column transition-rate matrix with entries $\smash{[\mathbf{A}]_{ij} = \frac{D\alpha_{i,j}}{V_{j}}}$ if $i\in S_{j}$, $\smash{[\mathbf{A}]_{ij} = -\frac{D}{V_{j}}\sum_{k\in S_{j}}\alpha_{k,j}}$ if $i=j$, or $[\mathbf{A}]_{ij} = 0$ otherwise. Here, $\mathbf{A}$ is a column transition-rate matrix since it is essentially nonnegative and has zero column sums with $\sum_{i=1}^{n}[\mathbf{A}]_{ij} = [\mathbf{A}]_{jj} + \sum_{i\in S_{j}}[\mathbf{A}]_{ij} = 0$ for all $i = 1,\hdots,n$.
\end{itemize}

As discussed in Section \ref{sec:stochastic_models}, the above systems can be interpreted as stochastic processes consisting of individual discrete units that act independently. Here, we use the exact solutions $\mathbf{u}(t) = e^{t\mathbf{A}}\mathbf{u}_{0} + t\varphi_{1}(t\mathbf{A})\mathbf{b}_{1}$ (Problem 1) and $\mathbf{u}(t) = e^{t\mathbf{A}}\mathbf{u}_{0} + t\varphi_{1}(t\mathbf{A})\mathbf{b}_{1} + t^{2}\varphi_{2}(t\mathbf{A})\mathbf{b}_{2}$ (Problem 2) and approximate solutions $\mathbf{u}_{k+1} = e^{\tau\mathbf{A}}\mathbf{u}_{k} + \tau\varphi_{1}(\tau\mathbf{A})\mathbf{b}(t_{k}) + \tau\varphi_{2}(\tau\mathbf{A})[\mathbf{b}(t_{k+1})-\mathbf{b}(t_{k})]$ (Problem 3) and $\mathbf{u}_{k+1} = e^{\tau\mathbf{A}}\mathbf{u}_{k} + \tau\varphi_{1}(\tau\mathbf{A})\mathbf{b}(\mathbf{u}_{k})$ (Problems 4--6), together with Algorithm \ref{alg:stochastic_model1} (Problems 1--2), Algorithm \ref{alg:stochastic_model2} (Problem 3) and Algorithm \ref{alg:stochastic_model3} (Problem 4--7) to generate stochastic realisations $\widehat{\mathbf{u}}(t)$ (Algorithm 1) and $\widehat{\mathbf{u}}_{k}$ (Algorithms 2--3) of $\mathbf{u}(t)$ and $\mathbf{u}_{k}$, respectively. To compute the matrix functions of the form $\varphi_{\ell}(\gamma\mathbf{A})$ we use MATLAB's \texttt{expm} function combined with the identity provided in the proof of Proposition \ref{prop:1}(i), which remains valid when $\mathbf{A}$ is replaced with $\gamma\mathbf{A}$ as mentioned in Remark \ref{rem:1}. The stochastic realisations $\widehat{\mathbf{u}}(t)$ (Problem 1--2) and $\widehat{\mathbf{u}}_{k}$ (Problems 3--7) can then be mapped back to the original variable via the relationships $\widehat{\mathbf{c}}(t) = \mathbf{V}^{-1}\widehat{\mathbf{u}}(t)$ (Problems 1--2) or $\widehat{\mathbf{c}}_{k} = \mathbf{V}^{-1}\widehat{\mathbf{u}}_{k}$ (Problems 3--7) and compared to the corresponding deterministic solution, $\mathbf{c}(t)$ (Problems 1--2) or $\mathbf{c}_{k}$ (Problems 3--7), at specified discrete times. Results in Figures \ref{fig:plot1}--\ref{fig:plot3} compare the deterministic solution to one realisation of the stochastic model and depict the corresponding positions of the individual discrete units over time. Note that the time step of $\tau = 0.005$ used in Problems 4 and 5 exceeds the constraint $\tau \leq h^{2}/(2D)$ (i.e. $\tau \leq 0.00025$ for Problem 4 and $\tau \leq 0.0017$ for Problem 5) required for the equivalent stochastic model based on a forward Euler time discretisation \cite{carr_2025}. For each problem, the given stochastic simulation closely mimic the spatiotemporal dynamics of the underlying deterministic PDE model with numerical experimentation confirming decreased variability as $N_{s}$ is increased towards the $N_{s}\rightarrow\infty$ continuum limit.

\begin{figure}[p]
\centering
\includegraphics[width=1.0\textwidth]{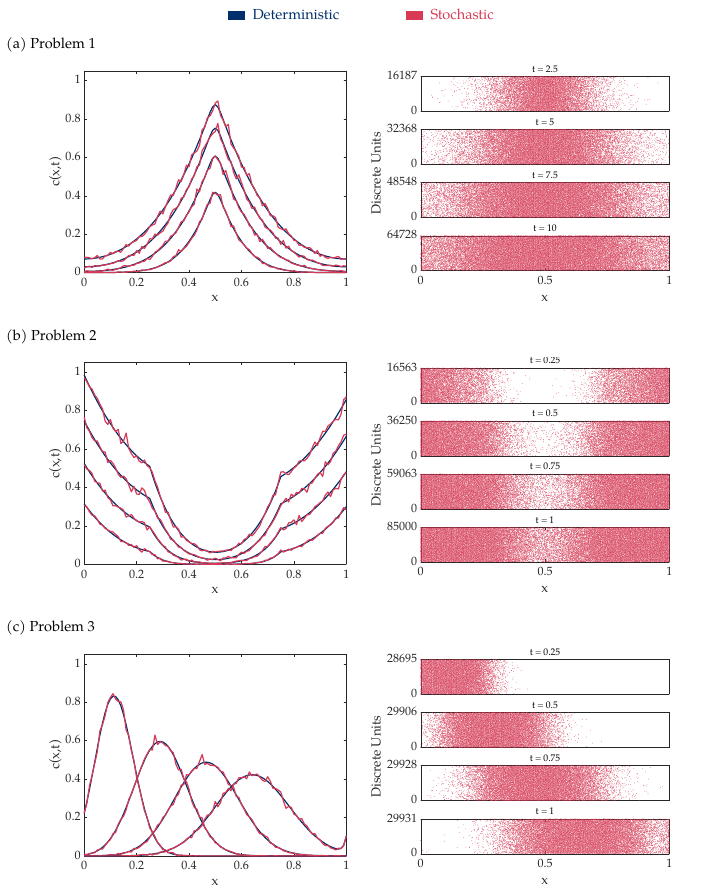}
\caption{\textbf{[Problem 1 and 2]} (Left panels) Deterministic solution [plot of $(x_{i},[\mathbf{c}(t_{k})]_{i})$] compared to one realisation of the stochastic model [plot of $(x_{i},[\widehat{\mathbf{c}}(t_{k})]_{i})$] for $i = 1,\hdots,n$ and $k = 1,\hdots,4$, where $\widehat{\mathbf{c}}(t) = \mathbf{V}^{-1}\widehat{\mathbf{u}}(t)$ with $\widehat{\mathbf{u}}(t)$ obtained from Algorithm \ref{alg:stochastic_model1}. (Right panels) Spatial distribution of discrete units where the number of discrete units in finite volume $i$ at time $t_{k}$ is given by $N([\widehat{\mathbf{u}}(t_{k})]_{i})$ and the upper limit on the vertical axes is the corresponding total number of discrete units in the system i.e. $\sum_{i=1}^{n}N([\widehat{\mathbf{u}}(t_{k})]_{i})$. \textbf{[Problem 3]} (Left panels) Deterministic solution [plot of $(x_{i},[\mathbf{c}_{k}]_{i})$] compared to one realisation of the stochastic model [plot of $(x_{i},[\widehat{\mathbf{c}}_{k}]_{i})$] for $i = 1,\hdots,n$ and $k = k_{1},k_{2},k_{3},k_{4}$, where $\widehat{\mathbf{c}}_{k} = \mathbf{V}^{-1}\widehat{\mathbf{u}}_{k}$ with $\widehat{\mathbf{u}}_{k}$ obtained from Algorithm \ref{alg:stochastic_model3}. (Right panels) Spatial distribution of discrete units where the number of discrete units in finite volume $i$ at time $t_{k}$ is given by $N([\widehat{\mathbf{u}}_{k}]_{i})$ and the upper limit on the vertical axes is the corresponding total number of discrete units in the system i.e. $\sum_{i=1}^{n}N([\widehat{\mathbf{u}}_{k}]_{i})$. Parameters: $L = 1$, $n = 101$, $h = 0.01$, $N_{s}=2\times 10^{5}$ (Problems 1--3); $[t_{1},t_{2},t_{3},t_{4}] = [2.5,5,7.5,10]$ (Problem 1), $[t_{1},t_{2},t_{3},t_{4}] = [0.25,0.5,0.75,1]$ (Problem 2); $M=100$, $\tau = 0.01$, $[k_{1},k_{2},k_{3},k_{4}] = [25,50,75,100]$ (i.e. indices corresponding to $t=0.25,0.5,0.75,1$) (Problem 3).}
\label{fig:plot1}
\end{figure}

\begin{figure}[p]
\centering
\includegraphics[width=1.0\textwidth]{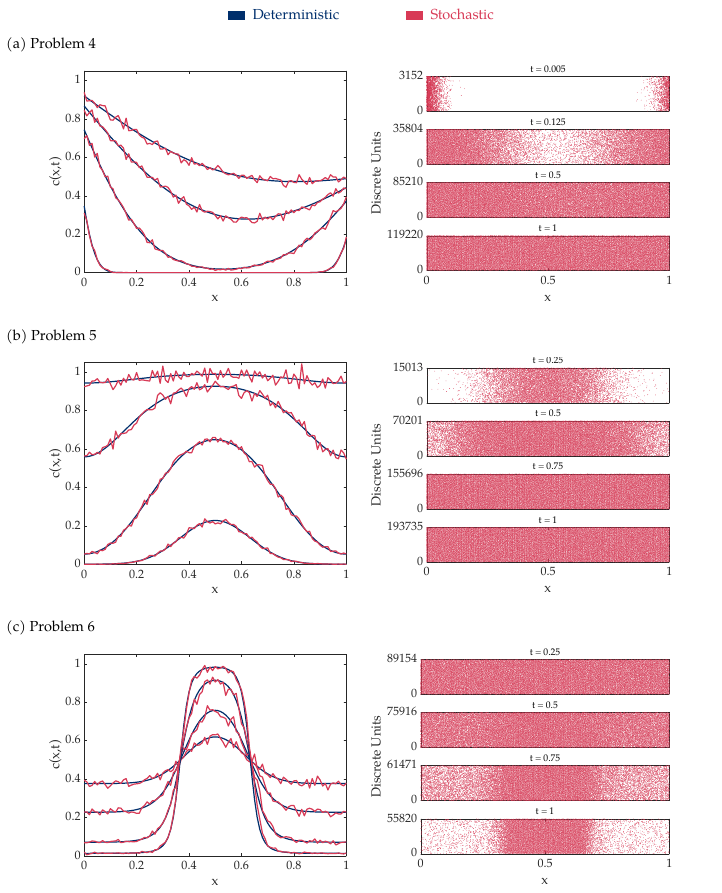}
\caption{\textbf{[Problems 4--6]} (Left panels) Deterministic solution [plot of $(x_{i},[\mathbf{c}_{k}]_{i})$] compared to one realisation of the stochastic model [plot of $(x_{i},[\widehat{\mathbf{c}}_{k}]_{i})$] for $i = 1,\hdots,n$ and $k = k_{1},k_{2},k_{3},k_{4}$, where $\widehat{\mathbf{c}}_{k} = \mathbf{V}^{-1}\widehat{\mathbf{u}}_{k}$ with $\widehat{\mathbf{u}}_{k}$ obtained from Algorithm \ref{alg:stochastic_model3}. (Right panels) Spatial distribution of discrete units where the number of discrete units located in finite volume $i$ at time $t_{k}$ is given by $N([\widehat{\mathbf{u}}_{k}]_{i})$ and the upper limit on the vertical axes is the corresponding total number of discrete units in the system i.e. $\sum_{i=1}^{n}N([\widehat{\mathbf{u}}_{k}]_{i})$. Parameters: $L = 1$, $n = 101$, $h = 0.01$, $T=1$, $M=400$, $\tau = 0.0025$, $N_{s} = 2\times 10^{5}$ (Problems 4--6); $[k_{1},k_{2},k_{3},k_{4}] = [2,50,200,400]$ (i.e. indices corresponding to $t=0.005,0.125,0.5,1$) (Problem 4); $[k_{1},k_{2},k_{3},k_{4}] = [100,200,300,400]$ (i.e. indices corresponding to $t=0.25,0.5,0.75,1$) (Problems~5~and~6).}
\label{fig:plot2}
\end{figure}

\begin{figure}[p]
\centering
\includegraphics[width=1.0\textwidth]{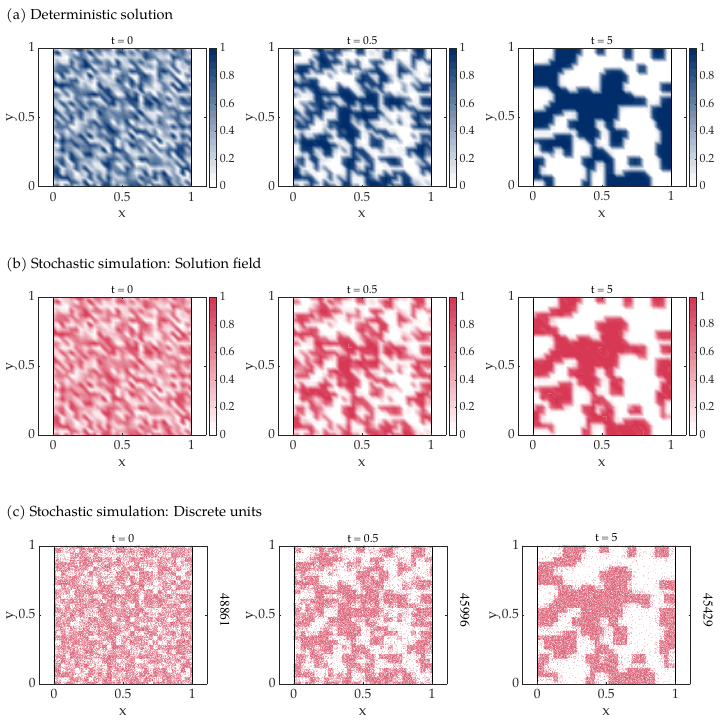}
\caption{\textbf{[Problem 7]} (a) Deterministic solution [contour plot of $(x_{i},y_{i},[\mathbf{c}_{k}]_{i})$ for $i = 1,\hdots,n$ and $k = k_{1},k_{2},k_{3}$]. (b) One realisation of the stochastic model [contour plot of $(x_{i},y_{i},[\widehat{\mathbf{c}}_{k}]_{i})$ for $i = 1,\hdots,n$ and $k = k_{1},k_{2},k_{3}$], where $\widehat{\mathbf{c}}_{k} = \mathbf{V}^{-1}\widehat{\mathbf{u}}_{k}$ with $\widehat{\mathbf{u}}_{k}$ obtained from Algorithm \ref{alg:stochastic_model3}. (c) Spatial distribution of discrete units where the number of discrete units located in finite volume $i$ at time $t_{k}$ is given by $N([\widehat{\mathbf{u}}_{k}]_{i})$ and the number on the right of the plot provides the corresponding total number of discrete units in the system  i.e. $\sum_{i=1}^{n}N([\widehat{\mathbf{u}}_{k}]_{i})$. Parameters: $L = 1$, $m = 31$, $n = 31^{2}$, $h = 1/30$, $T=5$, $M=200$, $\tau = 0.025$, $N_{s} = 10^{5}$, $[k_{1},k_{2},k_{3}] = [0,20,200]$ (i.e. indices corresponding to $t=0,0.5,5$).}
\label{fig:plot3}
\end{figure}

\section{Conclusion}
A fundamental property of the matrix exponential is that $e^{\mathbf{A}}$ is a row (column) stochastic matrix if $\mathbf{A}$ is a row (column) transition-rate matrix. In this paper, we have studied the generalisation of this property to the matrix phi-functions, $\varphi_{\ell}(\mathbf{A}) = \sum_{k=0}^{\infty}\mathbf{A}^{k}/(k+\ell)!$ where $\ell$ is a positive integer, which arise in exact and approximate solutions of certain systems of ordinary differential equations. Specifically, the main results/outcomes of the paper include
\begin{enumerate}[(i),itemsep=0pt,topsep=0pt]
\item Properties satisfied by the matrix phi-functions including $\widetilde{\varphi}_{\ell}(\mathbf{A}) = \ell!\varphi_{\ell}(\mathbf{A})$ is a row (column) stochastic matrix if $\mathbf{A}$ is a row (column) transition-rate matrix.
\item Insights into the nonnegative and conservative nature of exact and approximate solutions of some linear and semi-linear systems of ordinary differential equations with constant, time-dependent and state-dependent forcing/source terms.
\item New stochastic models and corresponding algorithms that can be used to generate stochastic simulations of the ODE systems in (ii) by using the stochastic properties in (i) and by assuming the continuous states can be partitioned into individual discrete units that act independently.
\end{enumerate}
The new stochastic models generalise previous work concerning linear ODE systems without a forcing term and as demonstrated in Section \ref{sec:examples} can be used to generate stochastic simulations that mimic the solution behaviour of some common deterministic partial differential equations. A key attraction of such stochastic models is that they can quantify uncertainty and variability in model predictions by carrying out repeated simulations of the stochastic algorithm and constructing probability distributions enveloping the deterministic solution (one for each spatial node). Such probability distributions can then be used to ask statistical inference questions about model predictions.

\subsection*{Data availability} 
Supporting MATLAB code to be made available on GitHub once paper is published.

\newpage
{\begingroup
\setlength{\bibsep}{1.5pt}
\selectfont\small
\bibliographystyle{unsrt}
\bibliography{references}
\endgroup}

\end{document}